\def \bF {\pmb{F}}
\def \bH {\pmb{H}}
\def \bG {\pmb{G}}
\def \bx {\pmb{x}}
\def \bX {\pmb{X}}
\newtheorem{theorem}{Theorem}
\newtheorem{lemma}{Lemma}
\newtheorem{definition}{Definition}
\newtheorem{remark}{Remark}
\newtheorem{corollary}{Corollary}
\newcommand{\bb}[1]{%
	\textbf{\textcolor{blue}{#1}}%
}
\newcolumntype{C}[1]{>{\centering\arraybackslash$}m{#1}<{$}}
\newlength{\mycolwd}                                
\begin{document}

\title{Pinning control of networks:  dimensionality reduction through simultaneous block-diagonalization of matrices}

\author{Shirin Panahi}
\affiliation{University of New Mexico, Albuquerque, NM, US 80131}

\author{Matteo Lodi}

\affiliation{University of Genoa, Via Opera Pia 11A, 16154, Genova, Italy}

\author{Marco Storace}
\affiliation{University of Genoa, Via Opera Pia 11A, 16154, Genova, Italy}



\author{Francesco Sorrentino}
\email{fsorrent@unm.edu}
\affiliation{University of New Mexico, Albuquerque, NM, US 80131}


\begin{abstract}
 In this paper, we study the network pinning control problem in the presence of two different types of coupling: (i) node-to-node coupling among the network nodes and (ii) input-to-node coupling from the source node to the `pinned nodes'. Previous work has {mainly} focused on the case that (i) and (ii) are of the same type. We decouple the stability analysis of the target synchronous solution into subproblems of the lowest dimension by using the techniques of simultaneous block diagonalization (SBD) of matrices. Interestingly, 
we obtain two different types of blocks, driven and undriven. The overall dimension of the driven blocks is equal to the dimension of an appropriately defined controllable subspace, while all the remaining undriven blocks are scalar. Our main result is a decomposition of the stability problem into four independent sets of equations, which we call quotient controllable, quotient uncontrollable, redundant controllable, and redundant uncontrollable. {Our analysis shows that} the number and location of the pinned nodes affect the number and the dimension of each set of equations. {We also observe that in a large variety of 
complex networks}, stability of the target synchronous solution is \emph{de facto} only determined by a single quotient controllable block.
\end{abstract}

\maketitle

\begin{quotation}
{In this paper we consider dynamical networks
formed of coupled identical oscillators and use
pinning control to synchronize all of the network
{nodes} on a given target time evolution.}
The
problem of stability of the entire network about
the target solution is then studied by linearizing the network dynamics about the target solution. Different from previous work, we focus on
the general case that the node-to-node coupling
among the network nodes is different from the
input-to-node coupling from the source node to
the ‘pinned nodes’. The stability problem is then
decoupled into the stability of independent subsystems of the lowest dimension.
Our analysis is relevant to the analysis of several {complex networks} for which we see
that stability of the target synchronous solution is often
determined by the maximum Lyapunov exponent associated with only
one subsystem that we call `quotient controllable'.
\end{quotation}

\section{Introduction}
\label{s:intro}
Gaining control over collective network dynamics is a hot topic in both graph and control theory. In particular, pinning control is a feedback control strategy largely used for imposing synchronization or consensus in complex dynamical networks \cite{wu2008cluster,liu2011cluster,su2012decentralized,yu2013synchronization,liu2014synchronization,wang2015pinning,liu2015synchronization,delellis2018partial, belykh2008cluster, belykh2000hierarchy,chen2022pinning,tang2013,chen2014pinning,liu2015pinning,orouskhani2016optimizing,chen2018pinning}. Specifically, one or more virtual leaders (the so-called \textit{sources}) are added to the network and define its desired trajectory. Each source directly controls only a small fraction of the network nodes (the pinned nodes), by exerting a control action that is a function of the pinning error vector, whose $i$-th component is given by the difference between the output of the considered source and the output of the $i$-th node. {Pinning control of time varying networks has been studied in \cite{dariani2011effect,feng2016cluster,lin2021pinning,shi2021synchronization}.} Our ability to impose a desired synchronous solution to a dynamical network can have important applications in different fields of science, such as in physical \cite{an:2011}, social \cite{kan:2015}, multi-agent \cite{trentelman:2013}, and biological networks  \cite{russo:2009} and pinning control is a widely adopted solution \cite{su:2013}. 

Here we consider the problem of pinning control of undirected 
networks of dynamical systems, for the case in which the node-to-node connectivity is different from the connectivity exerted on the pinned nodes, which is relevant to a variety of realistic scenarios and engineering applications. As an example, imagine a biological network that one wants to synchronize on a specific time evolution, by pinning some of the network nodes. However, the type of forcing one may be able to exert on the network is likely going to be different from the biological node-to-node interactions between the network nodes. This motivates the study of a problem in which the node-to-node coupling among the network nodes is different from the coupling exerted on the pinned nodes. Similar versions of this problem have been previously investigated in \cite{wu2009,barajas2018,vega2018,alanis2021} using a Lyapunov function (V-stability) which provides a sufficient stability condition. In this paper, we investigate stability using linearization, which provides both necessary and sufficient conditions.

We study the stability of the target synchronous solution in an undirected 
network of coupled dynamical systems, subject to the control action of one source node. Our goal is to reduce (through proper transformations) the stability analysis into simpler problems, which can be analyzed independently of one another. 
This problem presents mathematical challenges that require the introduction of a specific formalism; in particular, we use the techniques for simultaneous block diagonalization of matrices (SBD) \cite{uhlig:1973,maehara2010numerical, murota2010numerical} which allow the reduction of the stability problem in problems of lower dimension. Previous work has successfully applied this approach to both the stability of complete synchronization \cite{Ir:So} and cluster synchronization \cite{zhang2020}. However, no characterization was provided about the dimensions of the independent sets of equations in which the stability problem is reduced.
The dimension of these sets of equations (which are related to the diagonal blocks of some matrices) is the lowest and we show that it may vary based on the structure of the network and the choice of the pinned nodes. Once these sets/blocks are found, a maximum Lyapunov exponent (MLE) can be associated to each of them in terms of a Master Stability Function. The target synchronous solution is stable only when the MLEs corresponding to all the sets/blocks are negative, which provides a simple criterion for assessing stability.

In summary, by combining MSF and SBD we study the stability of the target solution in the considered networks subject to pinning control; this leads to a decomposition of the stability problem {into 
four} different types of independent equations, which we call quotient controllable, quotient uncontrollable, redundant controllable,  and  redundant  uncontrollable.
{The insight we provide on both the sizes and `roles' of the blocks is the main contribution of this paper. As stated before, though there is previous work on the SBD reduction, to the best of our knowledge, no paper has explained the `reason' for the blocks resulting from application of the SBD decomposition.}

To carry out the stability analysis, we resort to two alternative transformations: one ($T$) based on the SBD decomposition and one ($\hat{T}$) based on the concepts of controllable subspace and equitable clusters. 
We are the first ones to apply the SBD approach to the pinning control problem and in so doing we establish a connection between the blocks in which the stability problem is reduced and the particular choice of the network connectivity and of the pinned nodes. 
In particular, we show that the sizes of the blocks provided by a finest SBD are the same as for the blocks generated by the transformation $\hat{T}$, which has a clear interpretation in terms of controllability and quotient graphs. {We prove that the transformation $\hat{T}$ also provides a finest SBD.} 
This is another important contribution of this paper.


The rest of the paper is organized as follows. In Sec.\ \ref{s:Pin} we  introduce the problem of pinning control of networks. In Sec.\ \ref{stability} a stability analysis is derived for a network with pinned nodes. The solution to the stability problem consists of two steps presented in Secs. \ref{sec:SBD} and \ref{results}. 
First, we use the SBD method to reduce the dimension of the stability problem. Then, by using two appropriately defined transformation matrices, we obtain the transformation $\hat{T}$ which decouples the stability problem into four independent blocks. We then compare the application of the SBD transformation ($T$) with the transformation $\hat{T}$. The effects of the selection of different pinned nodes on multiple driven blocks is studied in Sec.\ VI. Application of the theory to larger complex networks is considered in Sec.\ VII. Finally, the conclusions are drawn in Sec. \ref{conclusion}.

\section{Problem Definition: Pinning control of networks}\label{s:Pin}
We consider an undirected network of $N$ coupled dynamical systems, {which evolve in time according to the following equation:}
\begin{equation}
\dot{\bx}_{i}(t) = \bF(\bx_{i}(t))+\sum_{j=1}^{N}A_{ij}[\bG(\bx_{j}(t))-\bG(\bx_{i}(t))] \quad  i=1, \cdots, N
\label{Eq1}
\end{equation}
where $\bx_i$ is the $m$-dimensional state of node $i$, $\bF : R^{m} \rightarrow R^{m}$ is the function that governs the dynamics of each node {when isolated}, and $\bG:R^{m} \rightarrow R^{m}$ is the node-to-node coupling function.  The network topology is described by the adjacency matrix $A$. If there is a connection between the nodes $i$ and $j$ then $A_{ij}=A_{ji}=1$ otherwise $A_{ij}=A_{ji}=0$.  Eq.\ \eqref{Eq1} can be rewritten as:
\begin{equation}
\dot{\bx}_{i}(t) = \bF(\bx_{i}(t))+\sum_{j=1}^{N}L_{ij}\bG(\bx_{j}(t)) \quad  i=1, \cdots, N,
\label{Eq2}
\end{equation}
where the Laplacian matrix $L=A-D$ and $D$ is a diagonal matrix, such that $D_{ii}=\sum_j A_{ij}$.
This network allows a completely synchronized  solution $\bx_1(t)=\bx_2(t)= \cdots =\bx_n(t)=\bx_s(t)$, which obeys,
\begin{equation}
\dot{\bx}_{s}(t)=\bF(\bx_{s}(t)).
\label{Eq3}
\end{equation}
A relevant question that links control theory to graph theory is: how can control inputs be introduced to force the network state to the target synchronous state? 

{Equation (3) allows for a} number of different {synchronous} solutions {determined by its initial condition.}
The problem studied in pinning control is how control inputs generated by source nodes can be designed to enforce stability of a particular `target' synchronous solution, $\textbf{x}_t(t)$ produced by the initial condition $\bx_t^0$,
\begin{equation}
\dot{\bx}_{t}(t)= \bF(\bx_{t}(t)), \quad \quad
\bx_t(0)= \bx_t^0.
\label{target}
\end{equation}

In order to address this problem, we introduce control inputs $\textbf{u}_i(t)$ as pinning control signals,
\begin{subequations}\label{Eq4}
\begin{equation}\label{Eq4a}
    \dot{\bx}_{i}(t) = \bF(\bx_{i}(t))+\sum_{j=1}^{N}L_{ij}\bG(\bx_{j}(t))+\textbf{u}_i(t),
\end{equation}
\begin{equation}\label{Eq4b}
    \textbf{u}_i(t)=  \gamma r_i[\bH(\bx_{t}(t))-\bH(\bx_{i}(t))],
\end{equation}
\begin{equation*}
    \quad  i=1, \cdots, N
\end{equation*}
\end{subequations}
where the binary scalar $r_i=1$ ($r_i=0$) if node $i$ is pinned ({not pinned}), and the scalar $\gamma>0$ measures the strength of the {control} coupling. Here, $\bH:R^{m} \rightarrow R^{m}$ is the source-to-pinned-node coupling function.
Note that the control action is only directly active on the pinned nodes. For instance, consider the network shown in Fig. \ref{Network with Pinned Nodes}. The network consists of $N=11$ nodes coupled via blue edges. A source node that provides the  target trajectory is shown in red. The control input  is directly applied (red directed edges) from the source node to the pinned nodes  $1$, $2$, and $3$, which are shown in black.
\begin{definition}\textbf{Network with inputs.}
A network with inputs is represented by a pair of $N\times N$ matrices $L$ and $R$, where the Laplacian matrix $L$ describes the network connectivity and the diagonal matrix $R$ is such that $R_{ii}=r_i, i=1,..,N$. We call $\mathcal{V}$ the set of the $N$ network nodes, $\mathcal{V}_P$ the set of the $s$ pinned nodes, and $\mathcal{V}_{NP}$ the set of $\tau=N-s$  non pinned nodes, $\mathcal{V}_P \cup \mathcal{V}_{NP}=\mathcal{V}$ and $\mathcal{V}_P \cap \mathcal{V}_{NP}= \emptyset$.
\end{definition}

\begin{definition}\textbf{Extended network.} \label{def extend network}
Given a network with inputs, its extended network is represented by the $(N+1) \times (N+1)$ directed Laplacian matrix $\tilde{L}$, \cite{sorrentino2007} defined below, 
\begin{equation}
    \tilde{L}=\begin{pmatrix}
  (L-R)_{N\times N} & \pmb{r}
  \\ 
  \mathbf{0}_{1\times N} & 0
  \end{pmatrix}
  \label{Lr}
\end{equation}
where the {column} vector $\pmb{r}=[r_1,r_2,...,r_N]^T$ and  $\mathbf{0}_{1\times N}$ is the zero row-vector of dimension $N$. 
\end{definition}

\begin{figure}[ht]
\centering
\includegraphics[scale=.4]{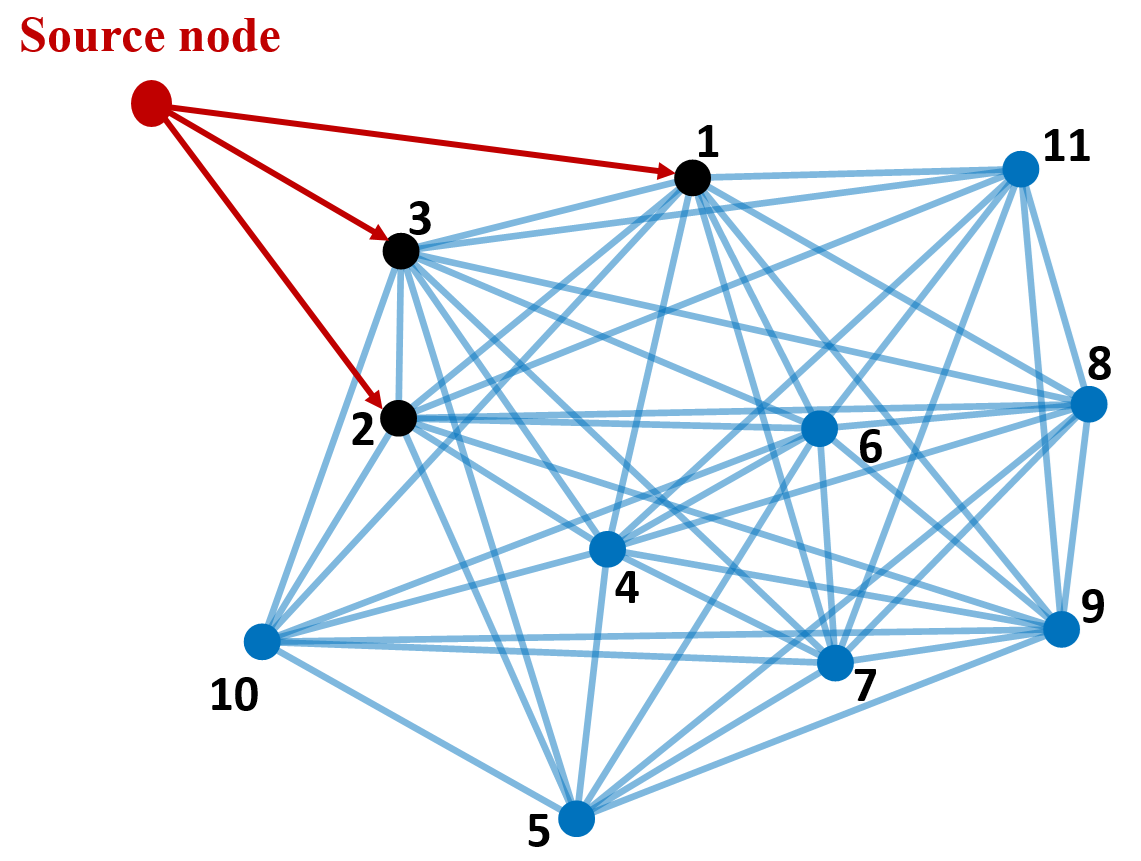}
\caption{A network of {$N=11$} nodes subject to pinning control. The source node is shown in red. The pinned nodes (i.e., the nodes that receive the source signal) are shown in black. The remaining network nodes are in blue. The node-to-node network connections are in blue. Red connections carry the target solution $\bx_{t}(t)$ (Eq. \eqref{target}) from the source node to the pinned nodes.}
\label{Network with Pinned Nodes}
\end{figure}

Previous work (see e.g., {\cite{sorrentino2007,sorrentino2007effects}}) has 
{analyzed stability of the target synchronous  solution} by considering pinned nodes with the same coupling function as the node-to-node coupling function, i.e., with $\bH=\bG$. Here instead we consider a generalization where the effect of the source node's time evolution on the pinned nodes is given by a different coupling function.

\section{Stability Analysis}\label{stability}
When all the $\pmb{x}_i(t)$ converge on the target solution $\pmb{x}_t(t)$, the control inputs $\textbf{u}_i(t)$ converge to zero.
To study the stability of the synchronous solution, a small perturbation $\delta \pmb{x}_i=(\pmb{x}_i-\pmb{x}_t)$ is considered \cite{Pe:Ca}. Linearization of Eq.\ \eqref{Eq4} about the target synchronous solution $\bx_1(t)=\bx_2(t)= \cdots =\bx_n(t)=\bx_t(t)$  yields, {see also \cite{sorrentino2007}}, 
\begin{equation}
\begin{array}{clc}
    \delta\dot{\bx}_{i}(t) = D\bF(\bx_{t}(t))\delta \bx_{i}(t)+\sum_{j=1}^{N}L_{ij}D\bG(\bx_{t}(t))
    \delta \bx_{j}(t)\\
    -\gamma r_i D\bH(\bx_{t}(t))\delta \bx_{i}(t),
    \end{array}
\label{Eq5}
\end{equation}
$i=1, \cdots, N$, where $D$ is the Jacobian operator.  

By stacking all perturbation vectors together in one vector $\delta \pmb{X}=[\delta \pmb{x}_1^T,\delta \pmb{x}_2^T,…,\delta \pmb{x}_N^T ]^T$, Eq.\ \eqref{Eq5} can be rewritten as
\begin{equation}
\begin{array}{clc}
\delta \dot{\bX}(t) = [ I_{N} \otimes D\bF(\bx_{t}(t)) + L\otimes D\bG(\bx_{t}(t))\\
-\gamma R \otimes D\bH(\bx_{t}(t)) ] \delta\bX(t),
\end{array}
\label{Eq11}
\end{equation}
where $\otimes$ is the Kronecker product or direct product. Our goal is to break the $(N \times m)$-dimensional Eq.\ \eqref{Eq11} into a set of independent lower-dimensional equations, thus performing a \textit{dimensional reduction}. An inherent difficulty is due to the fact that, with the exception of a few specific cases \cite{HYP}, it is not possible to simultaneously diagonalize both matrices $R$ and $L$. Instead, in what follows we seek a transformation that decouples the set of Eqs.\ \eqref{Eq11}, by simultaneously block diagonalizing $R$ and $L$ \cite{Ir:So}.

\section{Dimensionality reduction of the stability problem through SBD}\label{sec:SBD}

\begin{definition}{\textbf{Simultaneous Block Diagonalization of Matrices (SBD) {\cite{uhlig:1973,maehara2010numerical, murota2010numerical}}.}}
Given a set of {$q$ square $N$-dimensional} matrices $M_1, M_2, \cdots, M_q$, an SBD transformation is an orthogonal square matrix $T$ (transformation matrix) with dimension $N$ such that
\begin{equation}
T^{-1} M_k T =\oplus_{j=1}^{l}B^{k}_{j}, \quad k=1,2, \cdots, q,
\end{equation}
where the symbol $\oplus$ is the direct sum of matrices, $l$ denotes the number of blocks,  each block $B^{k}_{j}$ is a square matrix with dimension $b_j$ and $\sum_{j=1}^{l}b_j=N$.
\end{definition}
\begin{definition}{\textbf{Finest SBD.}}
A finest SBD is an SBD for which the resulting blocks can not be further refined \cite{uhlig:1973}. In particular this means that the size of the blocks can not be made smaller and that the number of blocks, $l$, is largest among all SBDs.
\end{definition}

\begin{remark}
    A finest SBD is unique only in the number and sizes of the irreducible blocks \cite{uhlig:1973}. 
\end{remark}

\begin{remark}
 The blocks resulting from a finest SBD are also matrices that belong to irreducible matrix $*$-algebras according to the structure theorem for matrix $*$-algebras.\cite{maehara2010numerical, murota2010numerical}
\end{remark}

Application of the SBD technique to complete synchronization of networks was first proposed in Ref.\ \cite{Ir:So} and only recently applied to cluster synchronization in Ref.\ \cite{zhang2020}.

Once we obtain $T=SBD(R,L)$, the matrices $R$ and $L$ are transformed as follows,
\begin{subequations}\label{eq:directsumLR}
\begin{equation}
    T^{-1} L T=  L_T=\oplus_{j=1}^{l} \hat{L}_j, \label{eq:directsumL}
\end{equation}
\begin{equation}
    T^{-1} R T=  R_T=\oplus_{j=1}^{l} \hat{R}_j, \label{eq:directsumR}
\end{equation}
\end{subequations}
where $L_T$ is the transformed matrix $L$, $R_T$ is the transformed matrix $R$, and the irreducible block pairs $(\hat{L}_j,\hat{R}_j)$ have the same dimensions $b_j$, $j=1,...,l$. One of the main contributions of this paper is to investigate the dimensions of the block pairs $(\hat{L}_j,\hat{R}_j)$, as a result of the original network topology (through $L$) and the choice of the pinned nodes (through $R$).
\subsection{Finding the matrix P} \label{Sec:P}

{The procedure described in Ref. \cite{maehara2011algorithm} to find the finest SBD transformation, $T$, requires two steps. First, finding a matrix $P$ which commutes with each member of the set of matrices (here, $R$ and $L$), that is $PR=RP$ and $PL=LP$, is selected. Second, the transformation matrix $T$ is constructed to have columns corresponding to the eigenvectors of the commuting matrix $P$.}

Here, we describe the approach to compute the matrix $P$ that commutes with both $R$ and $L$.
Without loss of generality, we apply {the same} permutation to {both} matrices $R$ and $L$ such that the matrix $R$ can be rewritten,
\begin{equation}
{R}=
\begin{pmatrix} \label{Btilde}
  I_{s} & 0\\ 
  0 & \mathbf{0}_{N-s}
  \end{pmatrix},
\end{equation}
where $I_s$ is the identity matrix of {size} $s$, $\mathbf0_{N-s}$ is the {square} zero matrix of dimension $N-s$, and $s$ is the number of pinned nodes. 
\begin{lemma}\label{lemma P}
Any matrix $P$ that commutes with the matrix $R$ in \eqref{Btilde} has 
the following block-diagonal structure,
\begin{equation}
P=\begin{pmatrix} \label{Pblock}
  P_{1} & 0\\
  0 & P_{2}
\end{pmatrix}
\end{equation}
where the block $P_{1}$ has dimension $s$  and the block $P_2$ has dimension $N-s$.
\end{lemma}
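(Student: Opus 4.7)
The plan is direct: decompose $P$ into blocks matching the block structure of $R$, impose the commutation condition $PR=RP$ entrywise on the off-diagonal blocks, and read off that those blocks must vanish.

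Concretely, I would write
\begin{equation*}
P=\begin{pmatrix} P_1 & Q \\ S & P_2 \end{pmatrix},
\end{equation*}
where $P_1$ is $s\times s$, $P_2$ is $(N-s)\times(N-s)$, $Q$ is $s\times(N-s)$, and $S$ is $(N-s)\times s$. Because $R$ acts as the identity on the first $s$ coordinates and as zero on the last $N-s$, right-multiplication by $R$ keeps the first $s$ columns of $P$ and zeros out the last $N-s$, while left-multiplication by $R$ keeps the first $s$ rows and zeros out the last $N-s$. Carrying out both products in block form gives
\begin{equation*}
PR=\begin{pmatrix} P_1 & 0 \\ S & 0 \end{pmatrix}, \qquad RP=\begin{pmatrix} P_1 & Q \\ 0 & 0 \end{pmatrix}.
\end{equation*}

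Equating $PR=RP$ block by block forces $Q=0$ and $S=0$, while leaving the diagonal blocks $P_1$ and $P_2$ completely unconstrained by the commutation with $R$. This yields exactly the structure \eqref{Pblock} claimed in the lemma, with the stated block dimensions.

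There is essentially no obstacle here: the argument is a one-line block computation, and no use is made of $L$, the Laplacian, or any spectral information. The only subtlety worth flagging is that Lemma \ref{lemma P} characterizes commutants of $R$ alone; the additional constraint $PL=LP$ will further restrict the allowable $P_1$ and $P_2$, and that is where the nontrivial structure (relevant to the SBD construction in Sec.\ \ref{Sec:P}) will appear in subsequent results.
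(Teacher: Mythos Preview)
Your proof is correct and is essentially identical to the paper's own argument: both partition $P$ into a $2\times 2$ block form compatible with $R$, compute $PR$ and $RP$ explicitly, and equate them to force the off-diagonal blocks to vanish. Your closing remark that $PL=LP$ will further constrain $P_1$ and $P_2$ is also exactly how the paper proceeds in the subsequent lemma.
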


\begin{proof}
First, we consider the matrix $R$ of Eq.\ \eqref{Btilde} and a generic matrix ${P}=\big(\begin{smallmatrix}
  P_{1} & P_{12}\\ 
  P_{21} & P_{2}
\end{smallmatrix}\big)$ with sub-blocks having the same dimensions as those of the matrix ${R}$. Then,
{from} the commutation equation $P{R}={R}P$ we {obtain}
\begin{equation}
\begin{array}{lcl}
\begin{pmatrix}
  P_{1} & P_{12}\\
  P_{21} & P_{2}
\end{pmatrix} \begin{pmatrix}
  I_s & 0\\ 
  0 & 0_{(N-s)}
\end{pmatrix}=\begin{pmatrix}
  I_s & 0\\ 
  0 & \mathbf{0}_{(N-s)}
\end{pmatrix} \begin{pmatrix}
  P_{1} & P_{12}\\
  P_{21} & P_{2}
\end{pmatrix}\\\\
\begin{pmatrix}
  P_{1} & 0\\
  P_{21} & \mathbf{0}_{(N-s)}
\end{pmatrix}=\begin{pmatrix}
  P_{1} & P_{12}\\
  0 & \mathbf{0}_{(N-s)}
\end{pmatrix}.
\end{array}
\label{Eq12}
\end{equation}
{Therefore,} to fulfill the commutation equation $PR=RP$, the sub-blocks $P_{12}$ and $P_{21}$ must be zeros and the matrix $P$ is block diagonal with diagonal-blocks $P_{1}$ and $P_{2}$ in which each block has the same dimension as the diagonal-blocks of matrix $R$. 
\end{proof}

{
\begin{corollary}
The transformation matrix $T$ also has the same block-diagonal structure,
\begin{equation}
T=\begin{pmatrix} \label{Tblock}
  T_{1} & 0\\
  0 & T_{2}
\end{pmatrix},
\end{equation}
where $T_1$ is the matrix of eigenvectors of $P_1$ and $T_2$ is the matrix of eigenvectors of $P_2$.
\end{corollary}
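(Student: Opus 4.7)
The plan is to read off the eigenstructure of $P$ from its block-diagonal form given by Lemma 1, and then to show that one can always choose an eigenbasis of $P$ that respects this block structure, yielding the claimed form of $T$.

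First I would observe that when a matrix is block-diagonal, its spectrum is the (multiset) union of the spectra of its diagonal blocks, and its eigenvectors can be obtained by padding with zeros: if $P_1 v_1 = \lambda v_1$ with $v_1 \in \mathbb{R}^s$, then $(v_1^T, \mathbf{0}_{N-s}^T)^T$ is an eigenvector of $P$ with the same eigenvalue, and analogously for eigenvectors of $P_2$. Since $P$ is (by construction in the SBD procedure, following \cite{maehara2011algorithm}) a real symmetric matrix, the blocks $P_1$ and $P_2$ are themselves symmetric, so each admits an orthonormal eigenbasis. Padding these two orthonormal bases with zeros in the complementary coordinates produces $s + (N-s) = N$ pairwise orthonormal vectors that are all eigenvectors of $P$.

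Next, because these $N$ padded vectors are mutually orthonormal in $\mathbb{R}^N$, they form an orthonormal eigenbasis of $P$. Arranging them as columns in the order ``eigenvectors supported on the first $s$ coordinates, then eigenvectors supported on the last $N-s$ coordinates'' gives precisely the block-diagonal matrix
\begin{equation*}
T=\begin{pmatrix} T_{1} & 0 \\ 0 & T_{2} \end{pmatrix},
\end{equation*}
where the columns of $T_1$ are the orthonormal eigenvectors of $P_1$ and the columns of $T_2$ are the orthonormal eigenvectors of $P_2$. Since the SBD algorithm of \cite{maehara2011algorithm} requires $T$ to be the (orthogonal) matrix of eigenvectors of $P$, this choice satisfies all the requirements.

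The step I expect to be the only subtle one is handling eigenvalues that are shared between $P_1$ and $P_2$: on such an eigenspace the most general eigenvector of $P$ mixes the two blocks, so a priori one might obtain a $T$ that is not block-diagonal. The resolution is that within each (possibly shared) eigenspace we still have complete freedom to choose any orthonormal basis, and choosing the basis consisting of the padded eigenvectors from $P_1$ together with the padded eigenvectors from $P_2$ is always a valid orthonormal choice. Thus a block-diagonal $T$ always exists, establishing the corollary.
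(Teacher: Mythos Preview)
Your proof is correct and, in fact, more complete than what the paper offers: the paper states this corollary without any proof, treating it as an immediate consequence of Lemma~1 and the eigenvector construction of $T$ from $P$. Your argument fills in exactly the details one would want---in particular, your handling of the degenerate case where $P_1$ and $P_2$ share an eigenvalue is a genuine subtlety that the paper does not mention, and your resolution (that within any eigenspace of $P$ one is free to choose the block-respecting orthonormal basis) is the right one. The assumption that $P$ is symmetric is justified by the Maehara--Murota procedure the paper cites, and is consistent with the explicit numerical example of $P$ given later in the paper.
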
}

\begin{lemma} \label{l2}
Application of the block diagonal transformation $T$ to the matrix ${R}$,  transforms ${R}$ back to itself, i.e., $T^{-1}{R} T=R_T={R}$.
\end{lemma}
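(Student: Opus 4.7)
The plan is to prove this by direct block-matrix computation, exploiting the two facts already at our disposal: the explicit form of $R$ given in equation \eqref{Btilde} and the block-diagonal structure of $T$ guaranteed by the Corollary following Lemma~\ref{lemma P}. Since $T = \mathrm{diag}(T_1, T_2)$ with $T_1$ being $s \times s$ and $T_2$ being $(N-s)\times(N-s)$, its inverse is simply $T^{-1} = \mathrm{diag}(T_1^{-1}, T_2^{-1})$, and the block partitioning matches exactly the block partitioning of $R$.

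First, I would write out the product $T^{-1} R T$ in its block form
\begin{equation*}
T^{-1} R T =
\begin{pmatrix} T_1^{-1} & 0 \\ 0 & T_2^{-1} \end{pmatrix}
\begin{pmatrix} I_s & 0 \\ 0 & \mathbf{0}_{N-s} \end{pmatrix}
\begin{pmatrix} T_1 & 0 \\ 0 & T_2 \end{pmatrix}.
\end{equation*}
Then I would carry out the two block multiplications in order. The middle $\times$ right product yields $\mathrm{diag}(T_1, \mathbf{0}_{N-s})$, since multiplying $T_2$ by the zero block kills it. Multiplying on the left by $\mathrm{diag}(T_1^{-1}, T_2^{-1})$ then gives $\mathrm{diag}(T_1^{-1} T_1, \mathbf{0}_{N-s}) = \mathrm{diag}(I_s, \mathbf{0}_{N-s}) = R$. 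This shows both $T^{-1} R T = R$ and (trivially) that $R_T = R$ is already in direct-sum form, consistent with \eqref{eq:directsumR}.

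There is essentially no obstacle here; the work was already done in establishing the Corollary, which forces $T$ to respect the same block partition as $R$. The only thing worth being careful about is making sure the dimensions of $T_1$ and $T_2$ are those inherited from $P_1$ and $P_2$ in Lemma~\ref{lemma P}, so that the block multiplication is well-defined and the identity block $I_s$ is preserved. I would close with a brief remark that this Lemma implies the $\hat{R}_j$ blocks appearing in \eqref{eq:directsumR} each have a very restricted form (each is either an identity block or a zero block of the appropriate size), which will be useful when characterizing the ``driven'' versus ``undriven'' blocks later in the paper.
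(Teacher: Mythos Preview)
Your proof is correct and is essentially identical to the paper's own proof: the paper also writes out $T^{-1}RT$ as a product of the three block-diagonal matrices and reads off $\mathrm{diag}(I_s,\mathbf{0}_{N-s})$ directly.

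One small correction to your closing remark: it is not true that each irreducible block $\hat{R}_j$ is necessarily an identity block or a zero block. What the lemma gives you is only that $R_T=R$, so each $\hat{R}_j$ is a diagonal $\{0,1\}$-matrix; a single driven block can mix $1$'s and $0$'s on its diagonal. Indeed, in the paper's own five-node example the driven block has dimension $c=2$ while $s=1$, so $R_d=\mathrm{diag}(1,0)$, and the paper explicitly notes that when $c>s$ the matrix $R_d$ has $s$ ones and $c-s$ zeros on its diagonal. The correct dichotomy is between blocks where $\hat{R}_j$ is nonzero (driven) and blocks where $\hat{R}_j=0$ (undriven).
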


\begin{proof}
By considering the matrix $R$ of Eq. \eqref{Btilde} and the matrix $T$ of Eq. \eqref{Tblock} and by using the fact that a block-diagonal matrix can be inverted block by block, we have
\begin{equation}
\begin{array}{lcl}
T^{-1}RT=
\begin{pmatrix} 
  T_{1}^{-1} & 0\\
  0 & T_{2}^{-1}
\end{pmatrix}
\begin{pmatrix} 
  I_s & 0\\
  0 & \mathbf{0}_{(N-s)}
\end{pmatrix}
\begin{pmatrix}
  T_{1} & 0\\
  0 & T_{2}
\end{pmatrix}=\\
\begin{pmatrix} 
  I_s & 0\\
  0 & \mathbf{0}_{(N-s)}
\end{pmatrix}.
\end{array}
\label{proof BT}
\end{equation}
\end{proof}

 The above lemma has important consequences. In fact, as we will see in Sec.\ \ref{ss:d&ud}, {this will lead to a decomposition of the stability problem into equations that can be of either one of two types: driven or undriven.} 
\begin{remark}
 Lemma \eqref{l2} shows that $R_T=R$. This is true independent of any permutation of the rows and columns of the matrix $R$. In the examples that follow  we will sometimes show the matrix $R_T$ in a form that corresponds to permutations of rows and columns of the matrix $R$ in Eq.\ \eqref{Btilde}.
\end{remark}

\subsection{Driven and Undriven blocks} \label{ss:d&ud}
As stated before, the {purpose of }using the SBD transformation is to break the stability problem of Eq.\ \eqref{Eq11} into a set of independent equations of lowest dimension. 

Due to the {block-diagonal} structure of both matrices $L_{T}$ and $R_{T}$ {and to Eq. \eqref{Btilde}}, they can be decoupled into the pairs $(L_{d},R_{d})$ and $(L_{ud},0_{N-c})$, 
\begin{equation}
L_{T}=\begin{pmatrix}
  L_{d} & 0\\
  0 & L_{ud}
\end{pmatrix}
\label{Eq18}
\quad \quad
R_{T}=\begin{pmatrix}
  R_{d} & 0\\
  0 & 0_{(N-c)}
\end{pmatrix},
\end{equation}
where the square matrices $L_d,R_d$ have size $c \geq s$ and the square matrix $L_{ud}$ has dimension $N-c$. {We remark that in general the blocks $L_d$ and $R_d$ are composed of smaller diagonal blocks. In particular,} if $c=s$, then $R_d=I_s$, otherwise $R_d$ is equal to a diagonal matrix with $s$ entries on the main diagonal equal to $1$ and $c-s$ entries on the diagonal equal to $0$. 
In what follows we refer to $(L_d,R_d)$ as the driven pair and $(L_{ud},0_{N-c})$ as the undriven pair. With an abuse of language, we also refer to $c$ as the dimension of the driven pair.

{Equation \eqref{Eq11}} can be {split into the following sets of equations, one driven and one undriven,}

\begin{subequations}\label{16}
\begin{equation}\label{driv}
\begin{split}
\dot{\pmb{\eta}}_{d}(t) = [ I_{c} \otimes D\bF(\bx_{s}(t)) + L_{d} \otimes D\bG(\bx_{s}(t)) \\
- \gamma R_{d} \otimes D\bH(\bx_{s}(t)) ]\pmb{\eta}_{d}(t),
\end{split}
\end{equation}
\begin{align}
\label{undriv}
\dot{\pmb{\eta}}_{ud}(t) &= \left[ I_{N-c} \otimes D\bF(\bx_{s}(t)) + L_{ud} \otimes D\bG(\bx_{s}(t)) \right]\pmb{\eta}_{ud}(t).
\end{align}
\end{subequations}

We then see that the original $mN$-dimensional problem of Eq. \eqref{Eq11} has been reduced {to }two independent lower-dimensional equations: the driven Eq.\ \eqref{driv} with dimension $mc$ and the undriven Eq.\ \eqref{undriv} with dimension $m(N-c)$.

{Next, we attempt to gain further insights into the physical meaning of the quantity $c$.}

\begin{definition}{\textbf{Dimension of the controllable subspace.}} Given {an} $N$-dimensional pair $(A,B)$, the Kalman controllability matrix {is} $K=[B,AB,A^2B,...,A^{N-1}B]$\cite{ogata2010modern}. The dimension of the controllable subspace is equal to the rank of the matrix $K$. The pair $(A,B)$ is completely controllable if 
{its} rank is equal to $N$.
\end{definition}

\begin{definition}{\textbf{
Transformation $\pmb{T_{c}}$.} \label{standard}} Given a $N$-dimensional canonical linear, time-invariant system given by the pair $(A,B)$, $A=A^T, B=B^T$, with dimension of the controllable subspace equal to $h$, the 
transformation $T_c(A,B)$ {splits} the system into a controllable subsystem and an uncontrollable subsystem \cite{kailath1980linear},
\begin{equation}
T_c^{-1} A T_c=\begin{pmatrix}
  A_C & 0\\
  0 & A_{U}
\end{pmatrix}
\quad \quad
T_c^{-1} B =\begin{pmatrix}
  B_C & 0 \\
  0  & 0
\end{pmatrix},
\end{equation}
where the controllable subsystem is given by the $h$-dimensional pair $(A_C,B_C)$ and the uncontrollable subsystem is given by the $(N-h)$-dimensional pair $(A_{U},0)$.
\end{definition}
{Note that the transformation applied to the matrix $A$ is a similarity transformation while the transformation applied to the matrix $B$ is not.}
\begin{theorem} The $c$ dimension of the driven pair $(L_d,R_d)$ is equal to $h$ the dimension of the controllable subspace of the pair $(L,R)$.
\end{theorem}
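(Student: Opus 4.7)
The plan is to prove $c = h$ by establishing both inequalities, each time comparing the Kalman controllability matrix of $(L,R)$ with that of the driven pair $(L_d,R_d)$ in the SBD basis.

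For $h \leq c$, I would use $T^{-1}L^{i}RT = L_T^{i}R_T$ together with the direct-sum structure of $L_T$ and $R_T$ from \eqref{eq:directsumLR}:
\[
L_T^{i}R_T = \begin{pmatrix} L_d^{i}R_d & 0 \\ 0 & 0 \end{pmatrix}.
\]
Rewriting $T^{-1}K$, with $K = [R,\,LR,\,\ldots,\,L^{N-1}R]$, and factoring invertible matrices on either side, the rank of $K$ coincides with that of the driven Kalman matrix $K_d = [R_d,\,L_d R_d,\,\ldots,\,L_d^{c-1}R_d]$ after truncating via Cayley--Hamilton. Thus $h = \mathrm{rank}(K) = \mathrm{rank}(K_d) \leq c$.

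For the reverse inequality, I would show that in a finest SBD every irreducible driven block pair $(\hat{L}_j,\hat{R}_j)$ (i.e.\ one with $\hat{R}_j \neq 0$) is controllable, which by block-diagonality of $K_d$ yields $\mathrm{rank}(K_d) = \sum_j b_j = c$. Suppose for contradiction that some such pair is uncontrollable. Then there exists a nontrivial $\hat{L}_j$-invariant subspace $U$ of the corresponding $b_j$-dimensional block with $\hat{R}_j U = 0$. Because $T$ is orthogonal and $L,R$ are symmetric (the network is undirected and $R$ diagonal), both $\hat{L}_j$ and $\hat{R}_j$ are symmetric, so the orthogonal complement $U^{\perp}$ of $U$ inside the block is also invariant under $\hat{L}_j$ and $\hat{R}_j$. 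Splitting the block as $U \oplus U^{\perp}$ would then produce an undriven sub-block $U$ and a strictly smaller driven sub-block $U^{\perp}$, contradicting the irreducibility of $(\hat{L}_j,\hat{R}_j)$ in a finest SBD. Combined with the first step this gives $c = \mathrm{rank}(K_d) = h$.

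The main technical obstacle is the refinement step, which requires verifying that both $U$ and $U^{\perp}$ remain invariant under the symmetric $\hat{L}_j$ and $\hat{R}_j$, and that the splitting genuinely produces a strictly finer simultaneous block-diagonalization of $L$ and $R$. This hinges on the orthogonality of $T$ and the symmetry of $L$ and $R$, both ensured by the undirected-network hypothesis. The rest of the argument is routine manipulation of block-diagonal matrices and similarity transformations.
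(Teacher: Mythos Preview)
Your proof is correct and shares the paper's overall two-sided strategy: one inequality comes from similarity-invariance of the controllable-subspace dimension (your explicit rank computation $h=\mathrm{rank}(K)=\mathrm{rank}(K_d)\le c$), and the other from the observation that an uncontrollable driven part would permit a strictly finer SBD. The execution differs in a useful way. The paper argues at the level of the aggregate pair $(L_d,R_d)$ and invokes the canonical $T_c$ transformation of Definition~\ref{standard} to split it into controllable and uncontrollable pieces; but since $(L_d,R_d)$ is in general already a direct sum of several irreducible sub-blocks, one still has to check that the $T_c$-split genuinely refines that existing block structure rather than cutting across it. By working block-by-block and exploiting the symmetry of $\hat L_j,\hat R_j$ (inherited from the orthogonality of $T$ and the symmetry of $L,R$) to build the orthogonal invariant decomposition $U\oplus U^{\perp}$ directly inside a single irreducible block, you sidestep that subtlety and make the contradiction with irreducibility immediate. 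One small expositional point: the existence of a nontrivial $\hat L_j$-invariant $U$ with $\hat R_j U=0$ already uses symmetry (via PBH, an uncontrollable mode gives a left eigenvector $v$ with $v^\top\hat R_j=0$, and symmetry turns this into $\hat R_j v=0$ with $\mathrm{span}(v)$ $\hat L_j$-invariant), so it would be cleaner to invoke symmetry before introducing $U$ rather than only afterward.
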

\begin{proof}
{We prove this by contradiction. First we assume $c < h$, i.e., the dimension $c$ of the driven pair $(L_d,R_d)$ is less than the dimension $h$ of the controllable subspace of the pair $(L,R)$. Since the dimension of the controllable subspace is invariant under any similarity transformation, then the controllable subspace of the pair $(L_T,R_T)$ cannot be less than the dimension of the controllable subspace of the pair $(L,R)$, which contradicts the assumption.
Then we assume $c > h$, which indicates that the pair $(L_d, R_d)$ is not controllable, since the dimension of the controllable subspace of the pair $(L_d, R_d)$ is less than $c$. Then, a transformation $T_c$ could be applied to the pair $(L_d, R_d)$ that reduces it into a (lower-dimensional) controllable pair and an uncontrollable pair. Since the pair $(L_d, R_d)$ cannot be decomposed into smaller blocks {as it is part of a finest SBD}, the {original} assumption is not satisfied. We thus conclude $c=h$.
}

\end{proof}
Henceforth $c$ denotes the dimension of the controllability subspace.

\begin{theorem}
The undriven  {system} \eqref{undriv} is composed of $N-c$ scalar equations.
\end{theorem}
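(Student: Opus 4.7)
The plan is to exploit the symmetry of $L$ together with the hypothesis that $T$ is a \emph{finest} SBD. Inside the undriven subsystem the $R$-component is identically zero, so any further block refinement is constrained only by the structure of $L_{ud}$; meanwhile, $L_{ud}$ inherits symmetry from $L$ and therefore admits a complete orthogonal diagonalization. Combining these two facts will force $L_{ud}$ to already be diagonal in the given SBD.

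First I would observe that because $L = L^T$ and the SBD transformation $T$ is orthogonal, $L_T = T^{-1} L T = T^T L T$ is symmetric, and hence its diagonal block $L_{ud}$ satisfies $L_{ud} = L_{ud}^T$. By the spectral theorem there exists an orthogonal matrix $U$ of size $N-c$ such that $U^T L_{ud} U = \Lambda$ is diagonal, with the eigenvalues of $L_{ud}$ on its main diagonal. Next, I would introduce the augmented orthogonal transformation
\begin{equation*}
T' \;=\; T \cdot \mathrm{diag}(I_c,\, U).
\end{equation*}
Conjugating by $\mathrm{diag}(I_c, U)$ leaves the driven block of $L_T$ unchanged and maps the undriven block $L_{ud}$ to the diagonal matrix $\Lambda$; applied to $R_T = \mathrm{diag}(R_d, 0_{N-c})$, it likewise leaves $R_d$ unchanged and sends $0_{N-c}$ to $U^T \, 0 \, U = 0_{N-c}$. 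Thus $T'$ is an orthogonal matrix that simultaneously block-diagonalizes $L$ and $R$, with the same driven block as $T$ but with the undriven block of $L$ split into $N-c$ scalar ($1\times 1$) blocks and the undriven block of $R$ split into $N-c$ zero scalars.

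If $L_{ud}$ were not already diagonal, then $T'$ would produce strictly more blocks than $T$, contradicting the assumption that $T$ realizes a finest SBD. Therefore $L_{ud}$ must itself be diagonal, and consequently Eq.~\eqref{undriv} decouples into $N-c$ independent equations of the form $\dot{\boldsymbol{\eta}}_i = [D\bF(\bx_s(t)) + \lambda_i\, D\bG(\bx_s(t))]\boldsymbol{\eta}_i$, one for each eigenvalue $\lambda_i$ on the diagonal of $L_{ud}$, which is precisely the claim. The step I expect to require the most care is the justification that the composite transformation $T'$ yields a bona fide, strictly finer SBD in the sense of the earlier definition; this hinges on the fact that orthogonally diagonalizing a diagonal block of an already block-diagonal symmetric matrix is itself a valid SBD refinement of the full pair $(L,R)$, which the construction above makes explicit.
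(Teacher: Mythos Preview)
Your proof is correct and rests on the same key observation as the paper: because $L$ is symmetric and $T$ is orthogonal, $L_{ud}$ is symmetric and hence orthogonally diagonalizable, while the corresponding $R$-block is zero. The paper simply notes that one may further diagonalize $L_{ud}$ to decouple \eqref{undriv} into scalar equations; you add the extra (and slightly sharper) step of invoking the finest-SBD hypothesis by contradiction to conclude that $L_{ud}$ must already be diagonal under $T$, but the underlying idea is the same.
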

\begin{proof}
Since (i) the Laplacian matrix $L$ is symmetric $L=L^{T}$ and (ii) the transformation matrix $T=SBD(L,R)$ is orthogonal $T^{-1}=(T)^{T}$, we have that,
\begin{equation}
\begin{array}{clc}
    (T^{-1}LT)^{T}=T^{T}L^{T}(T^{-1})^{T}=T^{-1}LT,
\end{array}
\end{equation}
which shows that the matrix $L_T=T^{-1}LT$ is symmetric. It also follows that the blocks $L_d$ and $L_{ud}$ are symmetric and diagonalizable.  By diagonalizing the matrix $L_{ud}$ (and taking into account the fact that the matrix $R_{ud}=0$), the undriven equation \eqref{undriv} can be decoupled into a number of scalar equations. 
\end{proof}
According to Theorems 1 and 2, the pinning control stability problem can be decoupled into two sets of equations: $c$ driven equations and $(N-c)$ undriven equations. Also, $c$ is equal to the rank of the controllability matrix of the pair $({L},{R})$.

Next we seek to find $P_{1}$ and $P_{2}$ such that the matrix $P$ commutes with the matrix ${L}$.
\begin{lemma}\label{lemmaP1P2}
The matrices $P_1$ and $P_2$ can be found by 
{choosing a vector in the nullspace} of the $s^2+\tau^2$-dimensional  matrix $K_1^T K_1+K_2^T K_2$ defined below.
\end{lemma}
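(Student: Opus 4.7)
The plan is to convert the matrix commutation condition $PL=LP$ into a linear system on the entries of the block matrices $P_1$ and $P_2$ given by Lemma 1, and then express the resulting linear conditions in a way that $K_1$ and $K_2$ combine into a single positive semidefinite matrix whose null space characterizes all admissible $P$.

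First I would write the (permuted) Laplacian in the block form compatible with $R$ from \eqref{Btilde}, namely
\begin{equation*}
L=\begin{pmatrix} L_{11} & L_{12}\\ L_{12}^{T} & L_{22}\end{pmatrix},
\end{equation*}
where $L_{11}$ is $s\times s$, $L_{22}$ is $\tau\times\tau$, and $L_{12}$ is $s\times\tau$, using symmetry of $L$. Inserting the block-diagonal form $P=\mathrm{diag}(P_1,P_2)$ of Lemma 1 into $PL=LP$ and equating blocks yields the four matrix equations $P_1 L_{11}-L_{11}P_1=0$, $P_1 L_{12}-L_{12}P_2=0$, $P_2 L_{12}^{T}-L_{12}^{T}P_1=0$, and $P_2 L_{22}-L_{22}P_2=0$.

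Next I would vectorize these conditions. Set $p=\bigl[\mathrm{vec}(P_1)^{T},\,\mathrm{vec}(P_2)^{T}\bigr]^{T}\in\mathbb{R}^{s^{2}+\tau^{2}}$ and apply the standard identity $\mathrm{vec}(AXB)=(B^{T}\otimes A)\mathrm{vec}(X)$. Then the two \emph{diagonal}-block conditions become
\begin{equation*}
K_1\,p=0,\qquad K_1=\begin{pmatrix} L_{11}^{T}\otimes I_{s}-I_{s}\otimes L_{11} & 0\\[2pt] 0 & L_{22}^{T}\otimes I_{\tau}-I_{\tau}\otimes L_{22}\end{pmatrix},
\end{equation*}
and the two \emph{off-diagonal}-block conditions become
\begin{equation*}
K_2\,p=0,\qquad K_2=\begin{pmatrix} L_{12}^{T}\otimes I_{s} & -I_{\tau}\otimes L_{12}\\[2pt] -I_{s}\otimes L_{12}^{T} & L_{12}\otimes I_{\tau}\end{pmatrix},
\end{equation*}
both with $s^{2}+\tau^{2}$ columns, so that $K_1^{T}K_1+K_2^{T}K_2$ has size $(s^{2}+\tau^{2})\times(s^{2}+\tau^{2})$ as asserted.

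Finally, I would use the elementary fact that for any real matrices $K_1,K_2$ with the same number of columns the matrix $K_1^{T}K_1+K_2^{T}K_2$ is positive semidefinite, so
\begin{equation*}
\ker\bigl(K_1^{T}K_1+K_2^{T}K_2\bigr)=\ker(K_1)\cap\ker(K_2),
\end{equation*}
which, by the derivation above, is exactly the set of vectors $p$ that encode pairs $(P_1,P_2)$ producing a block-diagonal $P$ commuting with $L$. Reading $P_1$ and $P_2$ off from any nonzero $p$ in this null space therefore yields a matrix $P$ satisfying both $PR=RP$ (by construction of the block-diagonal ansatz) and $PL=LP$. The main technical point is the bookkeeping of the Kronecker identities and the choice of how to split the four commutation conditions between $K_1$ and $K_2$; the rest of the argument is the routine kernel characterization of a sum of Gram matrices.
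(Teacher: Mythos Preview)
Your proposal is correct and follows essentially the same approach as the paper: block-partition $L$, expand $PL=LP$ into four matrix equations, vectorize via the Kronecker identity to obtain $K_1p=0$ and $K_2p=0$, and then invoke $\ker(K_1^TK_1+K_2^TK_2)=\ker K_1\cap\ker K_2$. The only cosmetic differences are that you exploit $L_{21}=L_{12}^{T}$ from the outset and you supply the Gram-matrix kernel identity inline, whereas the paper defers that last step to its supplementary material.
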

\begin{proof}
By rewriting ${L}=\big(\begin{smallmatrix}
  L_{11} & L_{12}\\ 
  L_{21} & L_{22}
\end{smallmatrix}\big)$, the commutation equation $P{L}={L}P$ becomes
\begin{subequations}\label{11}
\begin{equation}\label{11a}
\begin{pmatrix}
  P_{1} & 0\\
  0 & P_{2}
\end{pmatrix} \begin{pmatrix}
  L_{11} & L_{12}\\ 
  L_{21} & L_{22}
\end{pmatrix}=\begin{pmatrix}
  L_{11} & L_{12}\\ 
  L_{21} & L_{22}
\end{pmatrix} \begin{pmatrix}
  P_{1} & 0\\
  0 & P_{2}
\end{pmatrix},
\end{equation}
\begin{equation}\label{13b}
    \begin{pmatrix}
  P_{1}L_{11} & P_{1}L_{12}\\
  P_{2}L_{21} & P_{2}L_{22}
\end{pmatrix}=\begin{pmatrix}
  L_{11}P_{1} & L_{12}P_{2}\\
  L_{21}P_{1} & L_{22}P_{2}
\end{pmatrix}.
\end{equation}
\label{Eq13}
\end{subequations}

Eq.\ \eqref{13b} corresponds to the following four equations,
\begin{subequations}\label{12}
\begin{equation}
P_{1}L_{11}=L_{11}P_{1}
\label{Eq14a}
\end{equation}
\begin{equation}
P_{2}L_{22}=L_{22}P_{2}
\label{Eq14b}
\end{equation}
\begin{equation}
P_{1}L_{12}=L_{12}P_{2}
\label{Eq14c}
\end{equation}
\begin{equation}
P_{2}L_{21}=L_{21}P_{1}
\label{Eq14d}
\end{equation}
\label{Eq14}
\end{subequations}
Define the operator $\text{vec} : \mathbb{R}^{n \times m} \mapsto \mathbb{R}^{nm}$ which takes as input a matrix and returns a vector consisting of the columns of the matrix stacked on top of each other.
Note that $P_1$ has dimension $s$ and $P_2$ has dimension $\tau = N-s$.
The four equations in Eq. \eqref{12} can be expressed as the following two linear systems of equations.
\begin{subequations}\label{eq:nullspaces}
  \begin{equation}
  \begin{array}{lcl}
  \left[ \begin{array}{cc}
    I_s \otimes L_{11} - L_{11} \otimes I_s & O_{s^2,\tau^2}\\
    O_{\tau^2,\tau^2} & I_\tau \otimes L_{22} - L_{22} \otimes I_\tau
  \end{array} \right] \left[ \begin{array}{c}
    \text{vec}(P_1) \\ \text{vec}(P_2)
  \end{array} \right] \\=
  \left[ \begin{array}{c}
    \boldsymbol{0}_{s^2} \\ \boldsymbol{0}_{\tau^2}
  \end{array} \right]
  \end{array}
  \end{equation} 
  \\
  \begin{equation}
      \left[ \begin{array}{cc}
        L_{12}^T \otimes I_s & - I_\tau \otimes L_{12}\\
        -I_s \otimes L_{21} & L_{21}^T \otimes I_\tau
      \end{array} \right] \left[ \begin{array}{c}
        \text{vec}(P_1) \\ \text{vec}(P_2)
      \end{array} \right] = \left[ \begin{array}{c}
        \boldsymbol{0}_{s\tau} \\ \boldsymbol{0}_{s\tau}
      \end{array} \right]
  \end{equation}
\end{subequations}
By inspecting Eq. \eqref{eq:nullspaces}, {the} vector $\boldsymbol{p} = \left[ \begin{array}{cc} \text{vec}(P_1)^T & \text{vec}(P_2)^T \end{array} \right]^T$ 
{must} lie in the intersection of the nullspaces of the two matrices that appear in Eq. \eqref{eq:nullspaces}.
Let $K_1 \in \mathbb{R}^{s^2+\tau^2 \times s^2+\tau^2}$ and $K_2 \in \mathbb{R}^{2s\tau \times s^2+\tau^2}$ be the matrices that appear in the first and second lines of Eq. \eqref{eq:nullspaces}, respectively.
{In the Supplementary Material Note 3, we show that $\mathcal{N}(K_1) \cap N(K_2) = \mathcal{N}(K_1^TK_1) \cap \mathcal{N}(K_2^T K_2) = \mathcal{N}(K_1^T K_1 + K_2^T K_2)$, {where the symbol $\mathcal{N}(M)$ indicates the null subspace of the matrix $M$.} Then, we create the vector $\boldsymbol{p} = \sum_{i=1}^{n_z} \alpha_i v_i$ where $n_z$ is dimension of the nullspace of the matrix $S = K_1^T K_1 + K_2^T K_2$ and $v_i$ are a basis for the nullspace of $S$, that is, eigenvectors corresponding to its zero eigenvalues.}
With $\boldsymbol{p}$, form the two matrices $P_1$ and $P_2$ and construct the full matrix $P$ that appears in Eq. \eqref{11a}.
Additional details of the derivation of Eq. \eqref{eq:nullspaces} can be found in Appendix \ref{Ap1}.\\
\end{proof}
\begin{remark}
 The procedure outlined to find the block diagonal matrix $P$ is a specialization of the method presented in \cite{SBD2} to simultaneously block diagonalize two matrices when one of the matrices has the form of $R$ in Eq. \eqref{Btilde}.
Using the method of \cite{SBD2} directly on $L$ and $R$ requires finding a vector in the nullspace of a square matrix of dimensions $(s+\tau)^2$ while the method outlined in Lemma \ref{lemmaP1P2} requires finding a vector in the nullspace of a square matrix of dimension $s^2+\tau^2$.
As $s^2 + \tau^2 < (s+\tau)^2$ for $s,\tau \geq 1$, the method presented here is more efficient than the method of \cite{SBD2} by exploiting the structure of $R$.
\end{remark}
To illustrate the procedure described in this section, consider the network with inputs {with $N=5$ nodes} shown in Fig.\ \ref{new} (a).
\begin{figure}[H]
\centering
  \includegraphics[scale=.25]{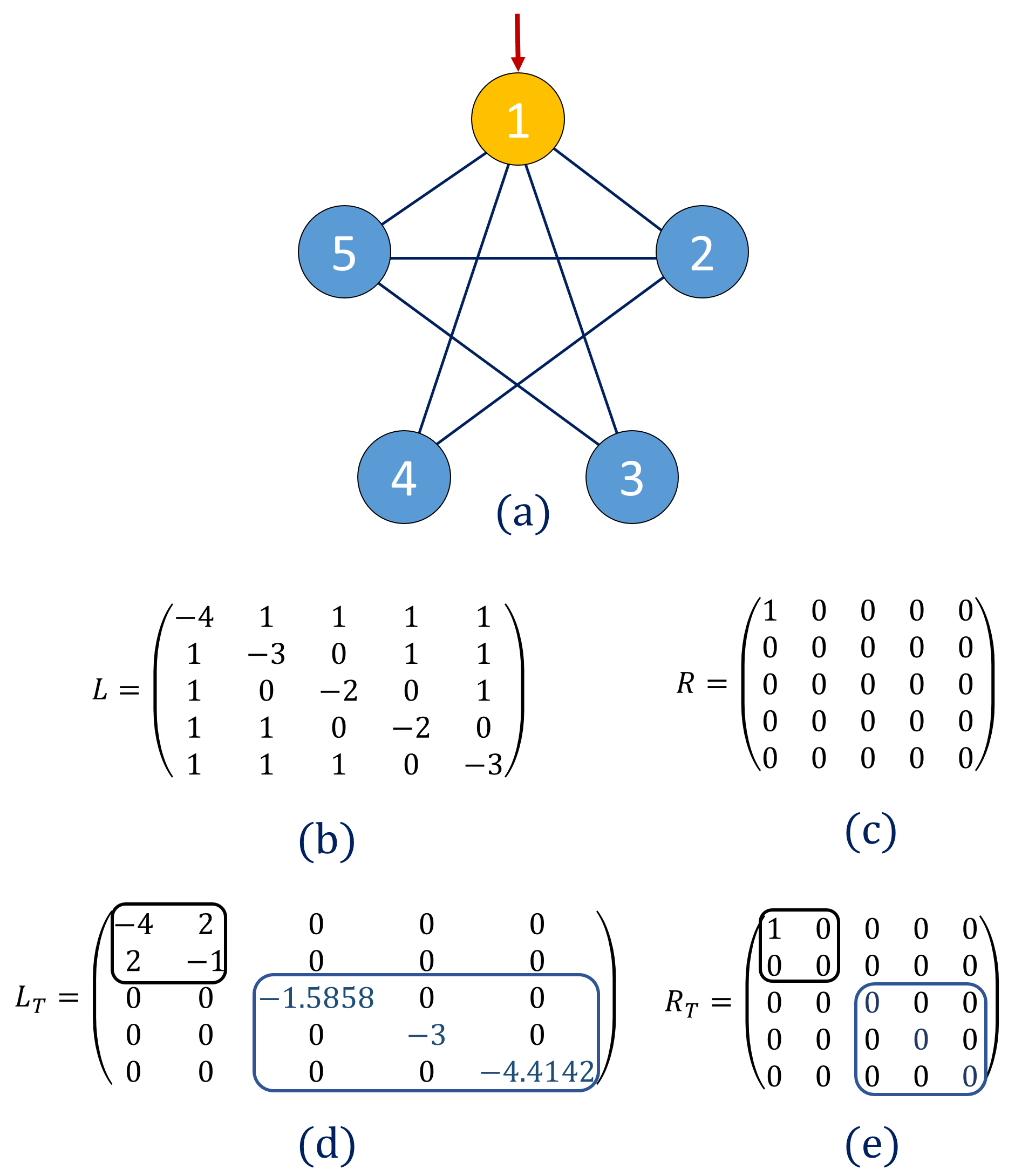}
\caption{(a)  {A} $5$-dimensional network with inputs. Nodes are color-coded according to the equitable clusters (see Definition \ref{EC}) to which they belong. The pinned node (node number $1$) in yellow is in a cluster, while all the remaining nodes (node $2$ to $5$) in blue are in another cluster. The effect of the pinning control action is graphically shown as a red arrow pointing at node $1$. {Matrices} (b) and (c) show the pair $(L, R)$ corresponding to this network
{and matrices} (d) and (e) show the pair $(L_T, R_T)$ obtained by application of the SBD transformation $T$. {The pair $(L_T, R_T)$  contains driven blocks (inside black boxes) and undriven blocks (inside blue boxes.)}
}
\label{new}
\end{figure}

To compute the transformation matrix $T$ for the pair $(L, R)$ shown in Fig.\ \ref{new} (b,c), we first find the block diagonal matrix $P$ using Lemma \ref{lemma P} and Lemma \ref{lemmaP1P2} and obtain:
\begin{equation}
   P= \begin{pmatrix}
      0.6828 & 0 & 0 & 0 & 0\\
      0 & 0.0036 & 0.7265 & 0.0801 & -0.1274\\
      0 & 0.7265 & -0.6427 & 0.5189 & 0.0801\\
      0 & 0.0801 & 0.5189 & -0.6427 & 0.7265\\
      0 & -0.1274 & 0.0801 & 0.7265 & 0.0036
    \end{pmatrix}
\end{equation}
Now by constructing the transformation matrix $T$ from the eigenvectors of the matrix $P$ and following  Eq. \eqref{eq:directsumLR} we obtain the pair of matrices $(L_T, R_T)$ which are shown in Fig. \ref{new} (d,e): The driven part (black blocks in Fig. \ref{new}) has dimension $2$ because the controllability subspace of the pair ($L,R$) has dimension $c=2$.
\begin{figure}[b!]
     \centering
     (a)\subfigure{\includegraphics[width=0.8\linewidth]{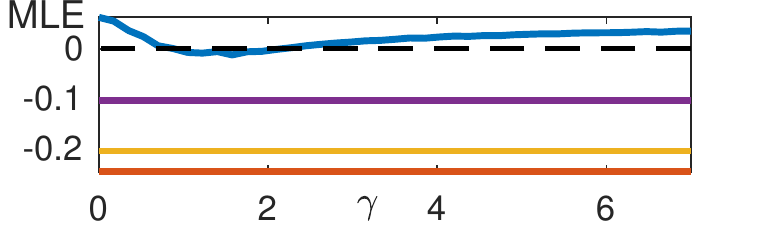}}
     (b)\subfigure{\includegraphics[width=0.8\linewidth]{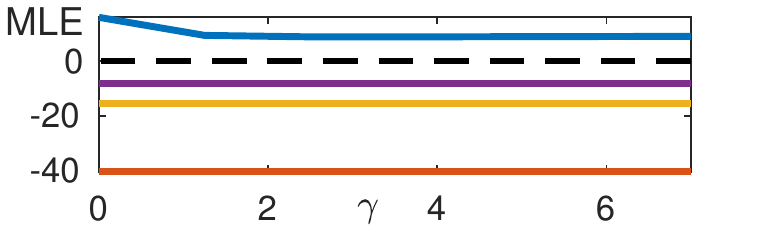}}
     (c)\subfigure{\includegraphics[width=0.8\linewidth]{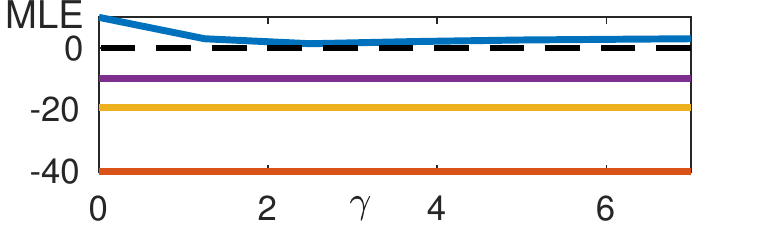}}
     (d)\subfigure{\includegraphics[width=0.8\linewidth]{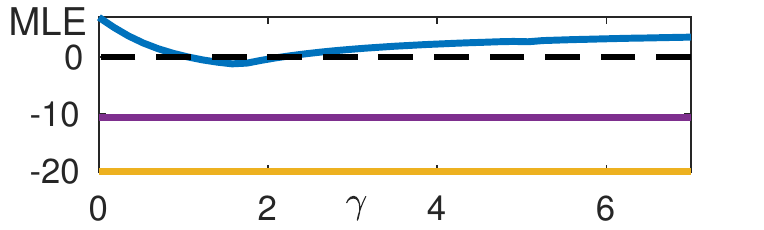}}
\caption{MLEs related to ($\hat{L}_1,\hat{R}_1$) (blue line), ($\hat{L}_2,\hat{R}_2$) (red), ($\hat{L}_3,\hat{R}_3$) (yellow), ($\hat{L}_4,\hat{R}_4$) (violet), for the example in Fig. \ref{new} when the target trajectory is either the stable chaotic attractor exhibited by a single oscillator (a) or a 1-cycle UPO  (b), or a 4-cycle UPO (c), or a  8-cycle UPO (d).}
     \label{fig:res_ex1}
 \end{figure}

{As an example to demonstrate the use of the MLE to determine the stability of the system, we make each of the nodes in the network in Fig.\ \ref{new} a R\"{o}ssler oscillator so that $m=3$ where the state vector of each node is governed by the function}
\begin{equation}\label{eq:ross}
 	\bF(\bx_i) = \begin{bmatrix}
 	- y_i - z_i \\
 	 x_i + a y_i\\
 	b + (x_i - c) z_i\\
 	\end{bmatrix}
 \end{equation} 
 where $a = 0.1$, $b = 0.1$ and $c = 15$. With these parameters the oscillator evolves within a stable chaotic attractor. The network {node-to-node} coupling function is $\bG(\bx_i) = [x_i, 0, 0]^T$, whereas the chosen {input-to-node coupling} function is $\bH(\bx_i) = [0, y_i, 0]^T$.
Fig.\ \ref{fig:res_ex1} shows the MLEs associated to Eqs. \eqref{driv} and \eqref{undriv} plotted vs $\gamma$ for the network of Fig. \ref{new}.

Fig. \ref{fig:res_ex1}(a) shows that the nodes of the network synchronize on the target trajectory for $\gamma \in [0.9,  2.3]$. In particular, it is apparent that the driven equations \eqref{driv} depend on $\gamma$, whereas the undriven equations \eqref{undriv} are independent. 
\begin{figure}[t!]
    \centering
     (a) \subfigure{\includegraphics[width=0.92\linewidth]{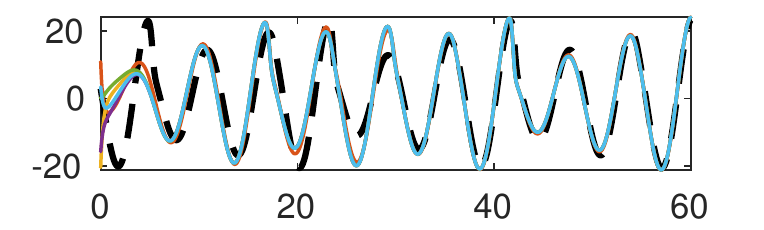}}
     (b) \subfigure{\includegraphics[width=0.92\linewidth]{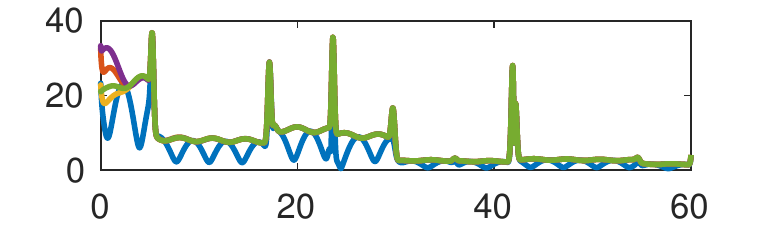}}
    \caption{Time plots for $\gamma = 1.5$. (a) Evolution of the state variables $x_i$ (solid lines, $i=1$ blue, $i=2$ red, $i=3$ yellow, $i=4$ purple, $i=5$ green) and of the first component of the target trajectory (dashed line). (b) Evolution of $E_i(t) = ||\bx_i - \bx_t ||_2$; color code as in panel (a).}
    \label{fig:time_evolution_chatoic}
\end{figure}
As a harder benchmark, {we assign an unstable periodic orbit (UPO) embedded in the attractor as the target trajectory}
which can be selected by fine-tuning the initial condition of Eq. \eqref{Eq3}.\footnote{We computed the UPO by using the continuation tool MATcont \cite{dhooge:2003,dhooge2006matcont}.}
According to the theory \cite{hunt:1996}, we expect node trajectories to converge to a target trajectory corresponding to a UPO only for a relatively high period: while Figs. \ref{fig:res_ex1}(b,c) show that it is not possible to have convergence to an unstable 1-cycle or 4-cycle due to the presence of a positive MLE for any $\gamma$ value, Fig. \ref{fig:res_ex1}(d) shows that the nodes of the network synchronize on the target trajectory (unstable 8-cycle) for $\gamma \in [1 \quad 2.2]$. This is confirmed by Fig. \ref{fig:time_evolution_chatoic}a, which shows the time evolution of the state variables $x_i$ (solid line) and of the corresponding component (dashed line) of the target trajectory $\textbf{x}_t$ (unstable 8-cycle) for $\gamma = 1.5$ and for random initial conditions. Fig.\ \ref{fig:time_evolution_chatoic}b shows the time evolution of the Euclidean norm $E_i$  of the difference $\bx_i - \bx_t$, from which we see that $\bx_i$ converges to $\bx_t$ for large time.
\section{Dimensionality reduction  through an alternative transformation $\hat{T}$} \label{results}
In what follows we discuss two transformations, each one of which results in a simultaneous block diagonalization
(although not necessarily finest) of the pair $(L,R)$: one is
the transformation $T_c$ which decouples the problem \eqref{Eq11} into a controllable equation and an uncontrollable equation;
the other one is the transformation associated with the quotient network \cite{sanchez2020exploiting, golubitsky2012singularities, golubitsky2003symmetry}, which in turn decouples the problem \eqref{Eq11} into two sets of equations, which will be referred to as quotient and redundant. 
By applying both transformations to the pair $(L,R)$, we obtain four independent equations: quotient controllable, quotient uncontrollable, redundant controllable, redundant uncontrollable. {Interestingly,} this decomposition into 4 independent {sets of} equations {has the same structure as} the decomposition produced by a finest SBD.

\subsection{Matrix $T_c$}\label{sec:matrix_tc}
From definition \ref{standard}, it follows that there is a transformation $T_c(L,R)$, which divides the system given by the pair $(L,R)$ in a controllable subsystem and an uncontrollable subsystem, {see also \cite{martini:2010}},
\begin{equation} \label{T1}
T_c^{-1} L T_c=\begin{pmatrix}
  L_C & 0\\
  0 & L_{U}
\end{pmatrix}
\quad \quad
T_c^{-1} R=\begin{pmatrix}
  R_C & 0\\
  0  & 0
\end{pmatrix},
\end{equation}
where the controllable subsystem is given by the $c$-dimensional pair $(L_C,R_C)$ and the uncontrollable subsystem is given by the $(N-c)$-dimensional pair $(L_{U},0)$.

\begin{remark}\label{rem:build_tc}
The first $c$ linearly-independent column vectors of the matrix $T_c$ can be chosen as an orthonormal basis for the range of the Kalman controllability matrix $K$. The remaining $N-c$ columns of $T_c$ are chosen such that $T_c$ is a basis for $\mathbb{R}^{N}$.
\end{remark}
\color{black}

\begin{lemma}
The matrix $T_c^{-1} R T_c$  {has} the same block structure as the matrix $T_c^{-1} R$ in Eq.\ \eqref{T1}.
\end{lemma}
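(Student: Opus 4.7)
The plan is to show that $T_c^{-1} R T_c$ has both its last $N-c$ rows and its last $N-c$ columns equal to zero, so that only the upper-left $c \times c$ block can be nonzero, matching the block pattern of $T_c^{-1} R$ in Eq.\ \eqref{T1}.

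First, I would invoke Remark \ref{rem:build_tc} to choose $T_c$ orthogonal: the first $c$ columns of $T_c$ are taken to be an orthonormal basis of the controllable subspace $\mathcal{C}=\mathrm{range}(K)$, and the remaining $N-c$ columns are taken to be an orthonormal basis of $\mathcal{C}^\perp$ (e.g.\ by Gram--Schmidt). Then $T_c^{-1}=T_c^T$. The key structural fact I would exploit is that $\mathrm{range}(R)\subseteq \mathcal{C}$, because the first $N$ columns of the Kalman matrix $K=[R,LR,\dots,L^{N-1}R]$ already include all columns of $R$. Combined with $R=R^T=R^2$ (so that $R$ is the orthogonal projector onto $\mathrm{range}(R)$), this gives $(\mathrm{range}(R))^\perp=\mathcal{N}(R)\supseteq \mathcal{C}^\perp$.

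Second, consider any column $(T_c)_j$ of $T_c$ with $j>c$. By construction $(T_c)_j\in\mathcal{C}^\perp\subseteq\mathcal{N}(R)$, hence $R(T_c)_j=\mathbf{0}$. Therefore the last $N-c$ columns of $RT_c$ vanish, and so do the last $N-c$ columns of $T_c^{-1}RT_c$. On the row side, Eq.\ \eqref{T1} already asserts that the last $N-c$ rows of $T_c^{-1}R$ are zero; right-multiplying by $T_c$ cannot create nonzero entries in those rows, so the last $N-c$ rows of $T_c^{-1}RT_c$ are zero as well. Combining the two observations yields
\begin{equation*}
T_c^{-1}RT_c=\begin{pmatrix} \tilde{R}_C & 0\\ 0 & 0\end{pmatrix},
\end{equation*}
with the same block partition $(c,N-c)$ that appears in Eq.\ \eqref{T1}.

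As a sanity check (and an alternative finishing move), once one has established that the last $N-c$ rows of $T_c^{-1}RT_c$ vanish, the conclusion for the last $N-c$ columns follows immediately from symmetry: with $T_c$ orthogonal, $T_c^{-1}RT_c=T_c^T R T_c$ is symmetric because $R$ is, so zero rows force zero columns in the corresponding positions. The only real obstacle in the argument is justifying the inclusion $\mathcal{C}^\perp\subseteq \mathcal{N}(R)$, which is exactly where the definition of the controllable subspace and the symmetry/idempotence of $R$ enter; everything else is bookkeeping on the orthogonal change of basis provided by Remark \ref{rem:build_tc}.
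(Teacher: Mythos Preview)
Your proof is correct and takes a genuinely different route from the paper's. The paper partitions $T_c$ with respect to the pinned/non-pinned split $(s,N-s)$ rather than your controllable/uncontrollable split $(c,N-c)$: writing $T_c=\begin{pmatrix}T_{C1}&T_{C2}\\T_{C3}&T_{C4}\end{pmatrix}$ with $T_{C1}\in\mathbb{R}^{s\times s}$, it computes $T_c^T R$ blockwise, invokes Definition~\ref{standard} to argue that $T_{C2}^T R_1=0$, and then uses $R_1=I_s$ to conclude $T_{C2}=0$; the final block form of $T_c^T R T_c$ is obtained by multiplying out with this explicit structure of $T_c$. Your argument bypasses all of that by working directly with subspaces: the inclusion $\mathrm{range}(R)\subseteq\mathcal{C}$ together with the projector identity $\mathcal{N}(R)=(\mathrm{range}(R))^\perp$ immediately kills the last $N-c$ columns of $RT_c$. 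Your symmetry ``sanity check'' is in fact the shortest possible proof here (zero rows of a symmetric matrix force zero columns), and the paper does not exploit it. What the paper's approach buys is the intermediate structural byproduct $T_{C2}=0$ about $T_c$ itself; what your approach buys is a cleaner argument that stays in a single block partition and makes the role of the controllable subspace explicit.
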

\begin{proof}
Due to the fact that matrix $L$ is symmetric, the transformation $T_c$ can be constructed to be an orthogonal matrix $T_c^{-1}=T_c^{T}$ which can be splitted into four sub-matrices as:
\begin{equation}
        T_c=
    \begin{pmatrix}
        T_{C1} & T_{C2}\\
        T_{C3} & T_{C4}
    \end{pmatrix}
\end{equation}
where the two square sub-matrices $T_{C1} \in \mathbb{R}^{s\times s}$ and $T_{C4} \in \mathbb{R}^{N-s\times N-s}$ and the two rectangular sub-matrices $T_{C2} \in \mathbb{R}^{s\times N-s}$ and $T_{C3} \in \mathbb{R}^{N-s\times s}$. Now according to Remark \ref{rem:build_tc} and by considering that (i) the matrix $R$~\eqref{Btilde} is symmetric and (ii) $T_c^{-1} R$ has the special block-diagonal form of Eq.\ \eqref{T1}, we have
\begin{equation}
    T_c^{T}R=
    \begin{pmatrix}
        T_{C1}^{T} & T_{C3}^{T}\\
        T_{C2}^{T} & T_{C4}^{T}
    \end{pmatrix}
    \begin{pmatrix}
        R_1 & 0\\
        0 & 0
    \end{pmatrix}
    =
    \begin{pmatrix}
        T_{C1}^{T}R_1 & 0\\
        T_{C2}^{T}R_1 & 0\\
    \end{pmatrix}.
\end{equation}
We also note that based on definition~\ref{standard}, $T_{C2}^{T} R_1=0$.

From Eq.\ \eqref{Btilde}, we know that $R_1$ is equal to the identity matrix $I_s$, which implies that $T_{C2}^{T}=0$. Then for the matrix $T_c^{T} R T_c$ we have:
\begin{equation}\label{Tcnew}
    T_c^T R T_c=
    \begin{pmatrix}
  R_C & 0\\
  0  & 0
\end{pmatrix}
    \begin{pmatrix}
        T_{C1} & 0\\
        T_{C3} & T_{C4}
    \end{pmatrix}=
        \begin{pmatrix}
  R_C T_{C1} & 0\\
  0  & 0
\end{pmatrix},
\end{equation}
which implies that the matrix $T_c^{T} R T_c$  {has} the same block structure as the matrix $T_c^{T} R$ in Eq.\ \eqref{T1}.

\end{proof}
\color{black}

It should be noted that the transformation $T_c$ is not unique.
In this paper we build $T_c = [U_c \quad X_{N-c}]$ as follows.
We define the singular value decomposition (SVD) of $K$ as $K = U \Sigma V^T$. $U_c$ contains the first $c$ columns of $U$ and $X_{N-c}$ contains the $N-c$ eigenvectors of $L$ that do not belong to the controllability subspace.
\color{black}

\subsection{Matrix $T_q$}\label{Sec.Tq}
We now introduce another transformation, which we refer to as the quotient transformation, which we show also leads to simultaneously block-diagonalizing the pair $(L,R)$.

\begin{definition}\textbf{Equitable cluster partition.} \label{EC}
Given the network with inputs (defined in \eqref{Lr}) described by the pair $({L},R)$, one can partition the set of the network nodes $\mathcal V$ into subsets, i.e., equitable clusters, $\mathcal{C}_1,\mathcal{C}_2,..,\mathcal{C}_M$, $\cup_{i=1}^M \mathcal{C}_i=\mathcal{V}$, $\mathcal{C}_i \cap \mathcal{C}_j=\emptyset$ for $i \neq j$, where
\begin{equation} \label{ecp}
  \sum_{k \in \mathcal{C}_{\ell}} \tilde{L}_{ik} = \sum_{k \in \mathcal{C}_{\ell}} \tilde{L}_{jk}, \quad \begin{aligned} \forall i,j &\in \mathcal{C}_k\\ \forall \mathcal{C}_k,\mathcal{C}_{\ell} &\subset \mathcal{V}. \end{aligned}
\end{equation}
\begin{remark} \label{remark pinned}
Each of the equitable clusters can either {contain} only nodes with external inputs (pinned nodes) or only nodes without external inputs (non-pinned nodes). No cluster
{contains} both pinned and non-pinned nodes. It follows that either $\mathcal{C}_k \subset \mathcal{V}_P$ or $\mathcal{C}_k \subset \mathcal{V}_{NP}$, $k=1,...,M$.
\end{remark}
\end{definition} 

In order to find the equitable clusters, we apply the algorithm developed by Belykh and Hasler \cite{belykh2011} to the extended network with Laplacian matrix $\tilde{L}$. The extended network can always be partitioned into $M+1$ equitable clusters in which one of the clusters necessarily corresponds 
{to} the only source node. In what follows we are going to neglect this one cluster and focus on the remaining $M$ equitable clusters $\mathcal{C}_1,\mathcal{C}_2,\ldots,\mathcal{C}_M$. 
{Now we can define the $N\times M$ equitable cluster indicator matrix $E$, where $N$ is the number of nodes and $M$ is the number of clusters: $e_{ij}=1$ if node $i$ is in cluster $\mathcal{C}_j$ and $0$ otherwise for $i=1,\ldots,N$ and $j=1,\ldots,M$.}

\begin{definition}\textbf{Quotient Graph.}\label{defquotient}
Given a network with inputs, represented by the matrix pair $(L, R)$, and the equitable clusters, the quotient network $(L_Q, R_Q)$ can be defined as a network where all the nodes in each equitable cluster are mapped to a single quotient node, 
\begin{subequations}
\begin{align}
L_Q &= (E^T E)^{-1/2} E^T L E (E^TE)^{-1/2}\\
R_Q &= (E^T E)^{-1/2} E^T R E (E^TE)^{-1/2} .
\end{align}
\end{subequations}
\end{definition}

We introduce the orthogonal transformation matrix ${T}_q$ whose first $M$ rows are $T_q^{E}=(E^T E)^{-1/2} E^T$ and whose remaining $N-M$ rows are orthogonal to the first ones. We call $T_q^r$ the matrix composed of the last $N-M$ rows of the matrix ${T}_q$. We take the matrix $T_q^r$ to have a particular structure, where each one of its rows is associated with one cluster, in the sense that all of the entries in that row not corresponding to the nodes in that cluster are equal to zero. Several such matrices have been proposed in the literature \cite{NC,schaub2016graph,cho2017stable}.

\color{black}
\begin{theorem} \label{lemmaquotient}
 Application of the `quotient' transformation ${T}_q$ to the pair $L,R$ yields,
\begin{subequations}
\begin{align}
{T}_q L {T}_q^{-1} &=L_Q \oplus L_O \label{a}\\
{T}_q R {T}_q^{-1} &=R_Q \oplus R_O \label{b},
\end{align}
\end{subequations}
where the square quotient matrices $L_Q,R_Q$ have size $M \times M$ and the square `redundant' matrices $L_O,R_O$ have size $(N-M) \times (N-M)$.
\end{theorem}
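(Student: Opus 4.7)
The strategy is to establish that the subspace $\operatorname{range}(E) \subset \mathbb{R}^N$ spanned by the cluster--indicator columns of $E$ is invariant under both $L$ and $R$. Once this holds, the block--diagonal forms in (\ref{a}) and (\ref{b}) follow from the fact that $T_q$ is orthogonal, so that its first $M$ rows $T_q^E=(E^TE)^{-1/2}E^T$ form (after transposition) an orthonormal basis of $\operatorname{range}(E)$, while the remaining $N-M$ rows $T_q^r$ span its orthogonal complement; in particular $T_q^r E = 0$.

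First I would show $RE = E\hat{R}$. By Remark \ref{remark pinned}, all nodes of a cluster $\mathcal{C}_k$ share the same value $r_{\mathcal{C}_k}\in\{0,1\}$, so the $k$-th column of $RE$ equals $r_{\mathcal{C}_k}$ times the $k$-th column of $E$; hence $\hat{R}=\operatorname{diag}(r_{\mathcal{C}_1},\ldots,r_{\mathcal{C}_M})$. Next I would show $LE=E\hat{L}$. The equitable partition condition \eqref{ecp} on $\tilde{L}$, restricted to indices $i,j$ in a common original cluster and to $\mathcal{C}_\ell$ a cluster of original nodes, reads $\sum_{m\in\mathcal{C}_\ell}(L-R)_{im}=\sum_{m\in\mathcal{C}_\ell}(L-R)_{jm}$, so the rows of $(L-R)E$ are constant within each cluster, i.e. $(L-R)E=E\widehat{M}$ for some $M\times M$ matrix $\widehat{M}$. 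Combining with $RE=E\hat{R}$ yields $LE=E(\widehat{M}+\hat{R})=:E\hat{L}$.

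Armed with the invariance identities $LE=E\hat{L}$ and $RE=E\hat{R}$, I would compute $T_q L T_q^{-1}=T_q L T_q^{T}$ block by block. The top--left $M\times M$ block is
\[
(E^TE)^{-1/2}E^T L E (E^TE)^{-1/2}=L_Q
\]
by Definition \ref{defquotient}. The bottom--left block is $T_q^r L E (E^TE)^{-1/2}=T_q^r E\,\hat{L}\,(E^TE)^{-1/2}=0$ since $T_q^r E=0$. By symmetry of $L$, the top--right block is the transpose of the bottom--left and therefore also vanishes. Setting $L_O:=T_q^r L (T_q^r)^T$ yields \eqref{a}. The identical calculation with $R$ (which is symmetric, being diagonal) and $RE=E\hat{R}$ in place of $LE=E\hat{L}$ yields \eqref{b} with $R_O:=T_q^r R (T_q^r)^T$.

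The hardest part is the clean derivation of $LE=E\hat{L}$ from the equitable partition of the \emph{extended} network. One must recognize that the extended partition condition on $\tilde{L}$, whose upper--left block is $L-R$ and whose last column (restricted to its first $N$ entries) is the pinning vector $\pmb{r}$, encodes two independent facts about the original graph: summing over the original clusters gives row--equality within clusters for $L-R$, while summing over the singleton source cluster gives exactly the content of Remark \ref{remark pinned}, namely $r_i=r_j$ for $i,j$ in the same cluster. Only the joint use of these two pieces of information lets one peel $-R$ off and obtain the invariance statement for $L$ alone, which is the technical heart of the argument.
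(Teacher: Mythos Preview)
Your proof is correct and takes a genuinely different, cleaner route than the paper. The paper omits the proof of \eqref{a} entirely, deferring to the cluster-synchronization literature, and proves \eqref{b} by writing $R=V_rD_rV_r^T$ in its eigenbasis and then exploiting the special row-structure of $T_q^r$ (each row supported on a single cluster) together with Remark~\ref{remark pinned} to kill the off-diagonal block $T_q^{E_p}(T_q^{r_p})^T$. By contrast, you give a single invariant-subspace argument that handles both \eqref{a} and \eqref{b} uniformly: you establish $LE=E\hat{L}$ and $RE=E\hat{R}$ directly from the equitable-partition condition on $\tilde{L}$, and then the block-diagonal form drops out of $T_q^rE=0$ plus symmetry. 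Your approach is more self-contained (it actually proves \eqref{a} rather than citing it), does not rely on the additional structural assumption on $T_q^r$ that the paper invokes, and makes transparent the role of the extended network: the equitable condition on $\tilde{L}$ packages exactly the two facts needed, namely row-equitability of $L-R$ within clusters and constancy of $r_i$ within clusters, which you then disentangle to recover invariance of $\operatorname{range}(E)$ under $L$ alone.
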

\begin{proof}
For succinctness we omit the proof of \eqref{a}. We just note that this proof follows {directly} from Refs.\ \cite{NC,schaub2016graph,cho2017stable}. 

{To prove Eq.\ \eqref{b}, we}
know from Eq.\ \eqref{Btilde} that the matrix $R$ is diagonal with the first $s$ entries being one and the remaining $N-s$ entries being zero. Therefore, the matrix $R$ has $s$  eigenvalues equal one and $N-s$ eigenvalues equal zero. Then, we can write the matrix $R=V_r D_r V_{r}^{T}$, where the matrix {$D_r=D1\oplus D{0}$} such that $D_1$ is the $s$-dimensional identity matrix and $D_{0}=0_{N-s}$. The matrix $V_r$ can be written in the form,
\begin{equation}
V_r=\begin{pmatrix}
({V}_1)_{s\times s}  &  \mathbf{0}_{s\times (N-s)}\\
\mathbf{0}_{(N-s)\times s} & ({V}_0)_{(N-s)\times (N-s)}
\end{pmatrix}
\end{equation}
due to the property that the eigenvectors associated with the $1$ eigenvalue ($0$ eigenvalue) have all of their last $N-s$ (first $s$) entries equal to zero.
Now we can write,
\begin{equation}
{T}_q R {T}_q^{-1} ={T}_q V_r D_r V_{r}^{T} {T}_q^{T}=({T}_q V_r) D_r ({T}_q V_r)^{T}.
\label{quotient3}
\end{equation}
It follows that
\begin{equation}
{T}_q R {T}_q^{-1}=\begin{pmatrix}
T_q^{E_{p}} (T_q^{E_{p}})^T & T_q^{E_{p}} (T_q^{r_{p}})^T\\
T_q^{r_{p}} (T_q^{E{p}})^T & T_q^{r_{p}} (T_q^{r_{p}})^T\\
\end{pmatrix},
\end{equation}
where $T_q^{E_{p}}$ is the $M\times s$ sub-matrix that shows the restriction of the first $m$ rows of $T_q$ corresponding to the clusters that contain pinned nodes. $T_q^{r_{p}}$ is the $(N-M)\times s$ sub-matrix that shows the restriction of the remaining $N-M$ rows of $T_q$ corresponding to the clusters that contain non-pinned nodes. Now considering the particular structure of $T_q$ 
and the characteristic of the equitable clusters of a network with inputs (Remark \ref{remark pinned}), we have $T_q^{E_{p}} (T_q^{r_{p}})^T=0$. Therefore, it can be concluded that ${T}_q R {T}_q^{-1} =R_Q \oplus R_O$ where $ R_Q=T_q^{E_{p}} (T_q^{E_{p}})^T$ and $R_O=T_q^{r_{p}} (T_q^{r_{p}})^T$.
\end{proof}
\color{black}
We conclude that application of the transformation $T_q$ to the pair $(L,R)$, $T_q^{-1} L T_q$ and $T_q^{-1} R T_q$,
simultaneously block diagonalizes $(L,R)$ into an $M$-dimensional pair $(L_Q,R_Q)$ and an $(N-M)$-dimensional pair $L_O,R_O$.

\begin{remark} \label{R}
 The results of Theorem 3 apply to any equitable cluster partition associated with the pair {$(L ,R)$}, not just the equitable cluster partition with the smallest number of clusters returned by the algorithm developed by Belykh and Hasler \cite{belykh2011}.
\end{remark}

\subsection{Transformation $\hat{T}$}\label{Sec.Tc}
Next we discuss {the} application of both transformations $T_c$ and $T_q$ to the pair $(L,R)$. While there are different ways in which this can be done, we found the most straightforward approach is to first apply $T_q$ and decouple the pair $(L,R)$ into  the pairs $(L_Q,R_Q)$ and $(L_O,R_O)$ and then apply equivalent transformations to the pairs $(L_Q,R_Q)$ and $(L_O,R_O)$ into their controllable and uncontrollable parts. We call $\hat{T}$ the resulting transformation. 

\begin{lemma} \label{Lemma hat T}
The pair $(L,R)$ is decoupled by application of the resulting transformation $\hat{T}$  
into four blocks,
\begin{equation}\label{Eq:hatT}
\begin{array}{clc}
\hat{T} L \hat{T}^{-1}=\begin{pmatrix}
  L_{qc} & 0 & 0 &0\\
  0 & L_{rc} & 0 &0\\
  0 & 0 & L_{qu} &0\\
  0 & 0 & 0 & L_{ru}
\end{pmatrix}
\\\\
\hat{T} R \hat{T}^{-1}=\begin{pmatrix}
   R_{qc} & 0 & 0 &0\\
  0 & R_{rc} & 0 &0\\
  0 & 0 & 0 &0\\
  0 & 0 & 0 & 0
\end{pmatrix},
\end{array}
\end{equation}
where $(L_{qc},R_{qc})$ is the controllable quotient pair,  $(L_{rc},R_{rc})$ is the controllable redundant pair, $(L_{qu},0)$ is the uncontrollable quotient pair, and $(L_{ru},0)$ is the uncontrollable redundant pair. The size of the pair $(L_{qc},R_{qc})$ is equal to the dimension $u$ of the controllable subspace of the quotient pair $(L_Q,R_Q)$, the sum of the sizes of the two pairs $(L_{qu},0)$ and $(L_{ru},0)$ is equal to the dimension $N-c$ of the {uncontrollable }subspace of the pair $(L,R)$. The size of the pair $(L_{rc},R_{rc})$ is equal to $c-u$. 
\end{lemma}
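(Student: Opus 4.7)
The plan is to build $\hat{T}$ in two stages: first apply the quotient transformation $T_q$ of Sec.\ \ref{Sec.Tq}, and then apply the controllability transformation $T_c$ of Definition \ref{standard} independently to each of the two resulting diagonal blocks. By Theorem \ref{lemmaquotient}, conjugation by $T_q$ takes $(L,R)$ to the block-diagonal pair $(L_Q\oplus L_O,\,R_Q\oplus R_O)$ of sizes $M$ and $N-M$. Then I would apply the controllability decomposition to $(L_Q,R_Q)$, producing a $u$-dimensional controllable sub-pair $(L_{qc},R_{qc})$ and an $(M-u)$-dimensional uncontrollable sub-pair $(L_{qu},0)$, where $u$ is the rank of the Kalman matrix of $(L_Q,R_Q)$. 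The same device applied to $(L_O,R_O)$ yields $(L_{rc},R_{rc})$ of some size $v$ and $(L_{ru},0)$ of size $N-M-v$. Composing these three orthogonal changes of basis and reordering the four diagonal blocks by a final permutation gives the form in Eq.\ \eqref{Eq:hatT}. The fact that the $R$-block vanishes on each uncontrollable summand is exactly the conclusion of the unnamed lemma proved at the end of Sec.\ \ref{sec:matrix_tc}, which ensures that the similarity-transformed $R$ inherits the same block-triangular shape as $T_c^{-1}R$ whenever $L$ is symmetric.

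To verify the sizes, I would invoke the fact that the dimension of the controllable subspace of a pair is invariant under similarity. Hence the pair $(L_Q\oplus L_O,\,R_Q\oplus R_O)$ still has controllable subspace of dimension $c$. Because the Kalman matrix of a block-diagonal pair is itself block-diagonal, its rank equals the sum of the ranks of the Kalman matrices of the two blocks; this immediately yields $c=u+v$, i.e.\ $v=c-u$, so the redundant controllable block has size $c-u$ as claimed. The remaining dimensions then add up correctly: $(M-u)+(N-M-v)=N-c$, matching the stated total size of the uncontrollable part.

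The main subtlety I anticipate is checking that Definition \ref{standard} is actually applicable to the two pairs $(L_Q,R_Q)$ and $(L_O,R_O)$, since that definition is stated for matrices in the canonical form of Eq.\ \eqref{Btilde}. The point here is that $R$ is an orthogonal projection (symmetric, idempotent, with spectrum $\{0,1\}$), so its conjugate $T_q R T_q^{-1}$ is again an orthogonal projection, and the direct-sum structure provided by Theorem \ref{lemmaquotient} forces each of $R_Q$ and $R_O$ to be an orthogonal projection in its own right. An additional orthogonal change of basis within each block therefore puts $R_Q$ and $R_O$ into the canonical form of Eq.\ \eqref{Btilde}, after which Definition \ref{standard} applies verbatim. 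Once this normalization step is incorporated into $\hat{T}$, the rest of the argument reduces to routine linear algebra and bookkeeping of block sizes.
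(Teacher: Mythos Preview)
Your proposal is correct and follows essentially the same approach as the paper: first apply $T_q$ to split $(L,R)$ into the quotient and redundant pairs, then apply the controllability decomposition to each block separately. The paper's own proof is considerably terser---it simply invokes Theorem~\ref{lemmaquotient} and asserts that the controllability transformation applied to each of $(L_Q,R_Q)$ and $(L_O,R_O)$ yields the stated four-block structure, without spelling out the size count $c=u+v$ or the normalization issue for $R_Q,R_O$ that you flag; your treatment of those points is a useful addition rather than a departure.
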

\begin{proof}
From Theorem \ref{lemmaquotient} we see that application of the transformation ${T}_q$ to the pair $(L, R)$ {decouples} the pair into a quotient pair $(L_Q, R_Q)$ and a redundant pair $(L_O, R_O)$. Application of the equivalent transformation to the pair $(L_Q, R_Q)$ will necessarily decouple the pair $(L_Q, R_Q)$ in its controllable and {uncontrollable } pairs and  application of the equivalent transformation to the pair $(L_O, R_O)$ will necessarily decouple the pair $(L_O, R_O)$ in its controllable and
{uncontrollable } pairs.

\end{proof}
Table \ref{Table2} summarizes the application of transformation {$\hat{T}$} that decouples the pair $(L, R)$ into four block pairs, {which in what follows we will refer to as the qc pair (quotient controllable), the qu pair (quotient uncontrollable), the rc pair (redundant controllable), and the ru pair (redundant uncontrollable.)}
\begin{table}[h!]
\caption{The four decoupled blocks and their dimensions after applying transformation $\hat{T}$ {to} the pair ($L, R$). The first two rows show the controllable pairs, $qc$ and $rc$, with dimensions $u$ and $c-u$, respectively. The second two rows indicate the uncontrollable pairs, $qu$ and $ru$. The sum of the size of the uncontrollable pairs is equal to $N-c$.
}
\centering
  \includegraphics[scale=.32]{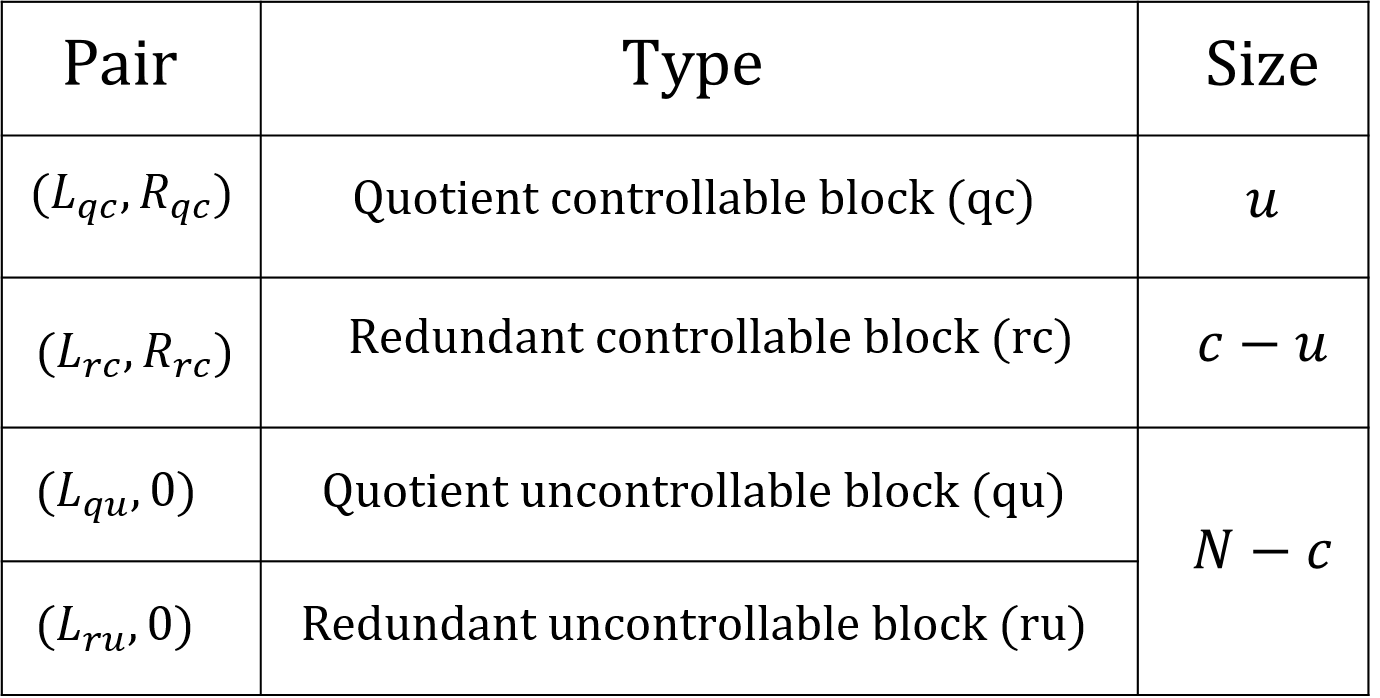}
\label{Table2}
\end{table}

\begin{theorem} \label{Theorem4}
    The transformation $\hat{T}$ leads to a finest SBD of the pair (L,R).
\end{theorem}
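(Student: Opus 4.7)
The plan is to prove Theorem \ref{Theorem4} by establishing that the block structure produced by $\hat{T}$, together with a subsequent diagonalization of its uncontrollable sub-blocks, realizes the (unique, by Remark 1) block-size profile of any finest SBD of the pair $(L,R)$.

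First, I would check that $\hat{T}$'s block structure is already consistent with Theorems 1 and 2. Lemma \ref{Lemma hat T} shows that the two driven blocks $(L_{qc},R_{qc})$ and $(L_{rc},R_{rc})$ have combined dimension $u+(c-u)=c$, matching the controllable-subspace dimension mandated by Theorem 1. The two uncontrollable blocks $(L_{qu},0)$ and $(L_{ru},0)$ together have dimension $N-c$; since $R$ vanishes on them and $L_{qu},L_{ru}$ inherit symmetry from $L$, each can be orthogonally diagonalized, yielding exactly $N-c$ scalar undriven blocks as Theorem 2 requires. This step is essentially bookkeeping: the driven/undriven split of $\hat{T}$ coincides with that of any finest SBD.

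Second — and this is the main obstacle — I would show that neither $(L_{qc},R_{qc})$ nor $(L_{rc},R_{rc})$ admits a nontrivial further SBD. I plan to argue by contradiction: a finer SBD of, say, $(L_{qc},R_{qc})$ would produce a matrix $P'$ commuting with both entries of the pair that is not block-scalar in the present partition. Embedding $P'$ back through $\hat{T}^{-1}$ yields a matrix $P$ commuting with both $L$ and $R$. By Lemmas \ref{lemma P} and \ref{lemmaP1P2}, such a $P$ must respect the pinned/non-pinned split and the equitable-cluster structure. For the quotient block, this would force a refinement of the equitable partition returned by the Belykh–Hasler algorithm, contradicting its minimality. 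For the redundant controllable block, a refinement would either introduce an additional uncontrollable direction (contradicting the complete controllability of the $rc$ pair established in Lemma \ref{Lemma hat T}) or recreate a quotient direction (contradicting the orthogonality between the range of $T_q^E$ and the rows of $T_q^r$ enforced in Section \ref{Sec.Tq}).

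Finally, I would combine the two steps: the $\hat{T}$ decomposition (after diagonalizing $L_{qu}$ and $L_{ru}$) is an SBD whose block sizes match any finest SBD, so by uniqueness (Remark 1) it is itself a finest SBD. The hard step is the irreducibility argument in the second paragraph; the cleanest formulation I foresee is algebraic — describe the commutant of $\{L,R\}$ via Lemmas \ref{lemma P}–\ref{lemmaP1P2}, push it through $\hat{T}$, and verify that its restriction to each of the two controllable blocks consists only of scalar multiples of the identity on that block. This turns the graph-theoretic minimality of the equitable partition and the algebraic minimality of the controllability split into a single commutant-based statement, matching the structure theorem for matrix $*$-algebras referenced in Remark 2.
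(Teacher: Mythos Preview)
Your overall framework---match the driven/undriven split to Theorems~1 and~2, then argue irreducibility of the controllable blocks via the commutant---is reasonable, but the second step contains a genuine gap that cannot be closed along the lines you propose.

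The claim that a further SBD of $(L_{qc},R_{qc})$ would ``force a refinement of the equitable partition returned by the Belykh--Hasler algorithm, contradicting its minimality'' is incorrect. A nontrivial $P'$ in the commutant of $(L_{qc},R_{qc})$ does lift through $\hat{T}^{-1}$ to a $P$ commuting with $L$ and $R$, but Lemmas~\ref{lemma P} and~\ref{lemmaP1P2} only force $P$ to be block-diagonal with respect to the pinned/non-pinned split and to lie in a certain nullspace; they do \emph{not} force $P$ to arise from any equitable partition. Concretely, take a quotient pair with $R_Q=I_M$ (every quotient cluster pinned). Then $(L_Q,R_Q)$ is completely controllable, so $u=M$ and the entire quotient is the qc ``block''---yet any orthogonal matrix diagonalizing the symmetric $L_Q$ simultaneously block-diagonalizes $(L_Q,I_M)$ into $M$ scalar pairs. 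This finer SBD exists even though the equitable partition is already coarsest; no contradiction with minimality arises. The same objection applies, \emph{mutatis mutandis}, to your irreducibility claim for the rc block: a nontrivial SBD of a completely controllable pair need not produce an uncontrollable direction, nor need it recreate a quotient direction.

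The paper's proof avoids this trap by \emph{not} treating the qc block as a single irreducible $u$-dimensional object. It relies on the specific construction of $T_c$ from the SVD of the Kalman matrix $K$ (end of Sec.~\ref{sec:matrix_tc}) and argues, via common invariant subspaces of $L_Q$ and $R_Q$, that this particular basis $U_c$ already resolves the controllable part into minimal-dimension invariant subspaces: the intersection $\mathcal{R}_{qc}\cap\mathcal{L}_{qc}$ is expressed through the SVD basis $U_p\subset U_c$ as a direct sum of minimal pieces, and the complement is shown to be indecomposable. The SVD-based choice of $T_c$ is what does the work; your abstract commutant argument never invokes it, and without it the $u$-dimensional qc block is simply not irreducible in general. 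To repair your approach you would need to show directly that the commutant of $\{L_{qc},R_{qc}\}$ \emph{in the SVD basis} is block-scalar, which requires engaging with the structure of $U_c$ rather than with the equitable partition.
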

\begin{proof}
    The proof is based on the concept of common invariant subspaces \cite{bischer2021simultaneous,arapura2004common}.
    As stated above, we first apply $T_q$ and decouple the pair $(L,R)$ into  the pairs $(L_Q,R_Q)$ and $(L_O,R_O)$.
We initially focus on the pair $L_Q = T_q^E L (T_q^E)^T,R_Q  = T_q^E R (T_q^E)^T$ and on the corresponding column subspaces $\mathcal{L}_Q$ and $\mathcal{R}_Q$. We compute the matrix $T_{c}$ as described at the end of Sec. \ref{sec:matrix_tc} and we obtain: 
\begin{equation}
    T_{c}^{-1} T_q^E L  (T_q^E)^T T_{c} = T_{c}^{-1} L_Q T_{c} = \begin{bmatrix}L_{qc} & 0 \\ 0 & L_{qu} \end{bmatrix}
\end{equation}
\begin{equation}
    T_{c}^{-1} T_q^E R  (T_q^E)^T T_{c} = T_{c}^{-1} R_Q T_{c} = \begin{bmatrix}R_{qc} & 0 \\ 0 &  0 \end{bmatrix}
\end{equation}

{The column subspaces of the matrices $L_{qc}$, $R_{qc}$ and $L_{qu}$ are $\mathcal{L}_{qc}$, $\mathcal{R}_{qc}$ and $\mathcal{L}_{q}$, respectively}.
 The linearly independent columns of $B_L = L  (T_q^E)^T$ form a basis for the span of $L_Q$. Analogously, the linear independent columns of $B_R = R  (T_q^{{E}})^T$ form a basis for the span of $R_Q$. Therefore, $\hat{B} = B_R^T B_L$ contains the $p$ linearly independent columns of the product $L_Q R_Q$.\\

We compute the SVD $\hat{B} = \hat{U} \hat{\Sigma} \hat{V}^T$. The matrix $\hat{U}_p$  contains the first $p$ columns of $\hat{U}$, which form a basis of $\mathcal{R}_Q \cap \mathcal{L}_Q$ \cite{van1996matrix}.

By using the basis $U_p$, the intersection space $\mathcal{R}_Q \cap \mathcal{L}_Q$ can be expressed as the direct sum of minimal-dimension subspaces\cite{galantai2008subspaces}. \\

The controllability matrix of the pair $(L_Q,R_Q)$ is $K = [R_Q, L_Q R_Q, \dots, L_Q^{N-1} R_Q ]$ and certainly contains  the columns of $\hat{B}$. Therefore, the columns of $U_p$ are a subset of those of $U_c$ (see. Remark \ref{rem:build_tc} and following text).

The matrix $T_c$ decomposes the vector space $\mathcal{R}_Q + \mathcal{L}_Q$ into the following direct sum of subspaces:
\begin{equation}
\resizebox{0.88\hsize}{!}{$
\begin{array}{cc}
    \mathcal{R}_Q + \mathcal{L}_Q = (\mathcal{L}_{qc} + \mathcal{R}_{qc}) \oplus \mathcal{L}_{qu} = \\\\
    \left[(\mathcal{R}_{qc} \cap \mathcal{L}_{qc}) \oplus (\mathcal{R}_{qc}-\mathcal{R}_{qc} \cap \mathcal{L}_{qc} ) \oplus (\mathcal{L}_{qc}-\mathcal{R}_{qc} \cap \mathcal{L}_{qc} )\right] \oplus \mathcal{L}_{qu}
    \end{array}
    $}
\end{equation}
As stated above, $(\mathcal{R}_{qc} \cap \mathcal{L}_{qc})$ is the direct sum of minimal-dimension subspaces because it is represented through the basis formed by the columns of $U_c$. The vector space $(\mathcal{R}_{qc}-\mathcal{R}_{qc} \cap \mathcal{L}_{qc} ) \oplus (\mathcal{L}_{qc}-\mathcal{R}_{qc} \cap \mathcal{L}_{qc} )$ cannot be decomposed in the direct sum of smaller common invariant subspaces because it is the direct sum of two subspaces with empty intersection.

 The vector space $\mathcal{L}_{qu}$ is the direct sum of $1$-dimensional subspaces, because it is represented through the basis formed by the columns of $X_{N-c}$, which contain the eigenvectors of $L_Q$ that belong to the non controllable subspace.

Since the transformation performed by $T_c$ decomposes the vector space  $\mathcal{L}_Q+\mathcal{R}_Q$ into the direct sum of minimal-dimensional subspaces, it produces a finest SBD of the transformed matrices $T_{c}^{-1} L_Q T_{c}$ e $T_{c}^{-1} R_Q T_{c}$. \\

Similar reasoning can be applied, \textit{mutatis mutandis}, to the pair $(L_o,R_o)$. Therefore, the transformtation $\hat{T}$ produces a finest SBD of the pair $(L,R)$.
\end{proof}


To demonstrate the application of the transformation $\hat{T}$ {to a} pinning control problem, consider the network shown in Fig.\ \ref{new} (a). The network has $M=2$ equitable clusters, one comprising only the pinned node (in yellow in the figure) and the other one comprising all the remaining nodes (in blue in the figure). The $M=2$-dimensional quotient network is controllable, therefore it cannot be reduced further. By computing the transformation matrix $\hat{T}$ and following Lemma \ref{Lemma hat T} and Eq.\ \eqref{Eq:hatT}, we have:
\begin{subequations}
\begin{equation}
    \hat{T}L\hat{T}^{-1}=\begin{pmatrix}
-1.8937 & 2.4254 & 0 & 0 & 0 \\ 
2.4254 & -3.1063 & 0 & 0 & 0 \\ 
0 & 0 & -1.5858 & 0 & 0 \\ 
0 & 0 & 0 & -3 & 0 \\ 
0 & 0 & 0 & 0 & -4.4142 
    \end{pmatrix}
    \end{equation}
    \begin{equation}
    \hat{T} R \hat{T}^{-1}=\begin{pmatrix}
-4.9981 & 0.0970 & 0 & 0 & 0 \\ 
0.0970 & -0.0019 & 0 & 0 & 0 \\ 
0 & 0 & 0 & 0 & 0 \\ 
0 & 0 & 0 & 0 & 0 \\ 
0 & 0 & 0 & 0 & 0
\end{pmatrix}
\end{equation}
\end{subequations}
{By applying the transformation $\hat{T}$ to the network of Fig.\ \ref{new}, the pair $(L, R)$ is decoupled into a} $2$-dimensional qc block and three scalar ru blocks. By further diagonalizing the qc block of the matrix $\hat{T}R\hat{T}^{-1}$ and the ru block of the matrix $\hat{T}L\hat{T}^{-1}$, we obtain {the same transformation of the matrices $L$ and $R$ shown in Fig.\ \ref{new} (d,e).
}

We also considered application of the theory to randomly generated networks.
First, we constructed matrices $L$ corresponding to $500$ Erd\H{o}s-R\`{e}nyi random networks with $N=50$ nodes and connection probability $p=0.05$. Second, to construct the matrix $R$, we randomly selected the pinned nodes for the case that the number $s$ was varied from $1$ to $49$. Finally, we applied both the transformations $T$ and $\hat{T}$  to the constructed pair $(L, R)$, and we compared the number and dimensions of the resulting blocks, which yields a total of $24500$ numerical experiments. We repeated the process for all the combinations of $L$ and $R$. We found that in all the cases considered, the number and dimensions of the blocks produced by the SBD transformation {$T$} matched the number and dimensions of the blocks produces by $\hat{T}$ transformation, which is in accordance with the result of Theorem \ref{Theorem4} that $\hat{T}$ provides a finest SBD.

{We finally comment on the complexity of computing the two transformations $T$ and $\hat{T}$. The computational complexity of calculating the SBD transformation $T$ is 
$O(N^3)$ following \cite{zhang2020}. The computational complexity of calculating the transformation $\hat{T}= T_q T_c$ 
is 
$O(N^3)$ \cite{kalman1963mathematical,blondel2000survey,cline2006computatio}. We thus conclude that 
the computational complexity of the two calculations is of the same order.}


\color{black}

\section{Pinned node selection and multiple driven blocks}

As stated in Sec.\ \ref{ss:d&ud}, the dimension of the driven pair $(L_d,R_d)$ coincides with the rank of the controllability matrix of the pair $({L},{R})$. This indicates that both the topology of the network and the number and choice of the pinned nodes affect the dimension of the driven pair. In general, it is possible that the driven pair may be formed of $1 \leq w \leq s$  independent driven pairs $(L_d^1,R_d^1)$, $(L_d^2,R_d^2)$,..., $(L_d^w,R_d^w)$, where $L_d \oplus_{i=1}^w L_d^i$ and $R_d \oplus_{i=1}^w R_d^i$.
If $s=1$ then there is only one driven pair. However, the case $s>1$ in which there is more than one pinned node needs further consideration.
In what follows we show that the choice of the pinned nodes in relation to the network topology determines $w$. 

\begin{definition}\textbf{Pinned nodes symmetry.}\label{pns}
A pinned node symmetry (PNS) is a symmetry of the pair $(L, R)$ \cite{della2020}, i.e. a permutation matrix $\Pi$ with the following two properties: (1) it commutes with both matrices $L$ and $R$, i.e, $L \Pi=\Pi L$, $R \Pi=\Pi R$ and (2) it swaps two or more pinned nodes.
\end{definition}
Because the permutation matrix $\Pi$ commutes with both $L$ and $R$, it provides a particular solution to the problem of finding the matrix $P$ described in Sec.\ \ref{Sec:P}. Hence, the matrix $T_{\Pi}$ formed of the eigenvectors of the matrix $\Pi$ provides an SBD transformation (though not necessarily finest.) 

\begin{definition}\textbf{Eigenvectors of a permutation matrix \cite{garcia2015}.}\label{defeppns}
Each permutation matrix $\Pi$ partitions the set of the network nodes into $f$ disjoint cycles, $\mathcal{K}_1,\mathcal{K}_2,...,\mathcal{K}_f$ of length $l_1,l_2,...,l_f$, respectively, $\sum_{i=1}^f l_i=N$.  For instance, a cycle of length $1$ corresponds to a node that gets mapped back to itself, a cycle of length $2$ corresponds to two nodes that get swapped with one another and so on.
Each eigenvector of the permutation matrix is associated with just one cycle, meaning that the entries of that eigenvector are nonzero only for entries that correspond to nodes in the cycle. The number of eigenvectors associated with each cycle is equal to the length of the cycle. As a result, the matrix $T_\Pi$ can be written in block diagonal form, $T_\Pi= \bigoplus_{i=1}^f T_\Pi^i$, where each square block $T_\Pi^i$ has size $l_i$. For each cycle, there is one eigenvector whose entries associated with the nodes of the cycle are all ones and the remaining entries are all zeros. The remaining eigenvectors associated with each cycle have the same structure and are all orthogonal to one another.
\end{definition}

\begin{lemma}\label{lemma7}
 Choosing two or more pinned nodes that are swapped by a PNS results into independent driven pairs for each one of these pinned nodes.
\end{lemma}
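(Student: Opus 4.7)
The plan is to exploit $\Pi$ itself as the commuting matrix $P$ described in Sec.\ IV.A. By Definition \ref{pns} the permutation $\Pi$ satisfies $L\Pi = \Pi L$ and $R\Pi = \Pi R$, so it is admissible as $P$, and the matrix $T_\Pi$ whose columns are the eigenvectors of $\Pi$ (described in Definition \ref{defeppns}) furnishes a valid SBD transformation of the pair $(L,R)$, although possibly not a finest one. I would then analyze the block structure produced by $T_\Pi$ directly from the cycle decomposition of $\Pi$.

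First I would observe that, because $R\Pi = \Pi R$ and $R$ is the diagonal $0/1$ matrix of Eq.\ \eqref{Btilde}, the permutation must map $\mathcal{V}_P$ to itself and $\mathcal{V}_{NP}$ to itself; hence each cycle of $\Pi$ is entirely contained in either $\mathcal{V}_P$ or $\mathcal{V}_{NP}$. Let $\mathcal{K}$ be the cycle of $\Pi$ of length $l\geq 2$ consisting of the pinned nodes mentioned in the hypothesis. By Definition \ref{defeppns}, the $l$ eigenvectors of $\Pi$ associated with $\mathcal{K}$ are supported only on its nodes and are mutually orthogonal; consequently they lie entirely in the range of $R$, so $R$ acts as the identity on each of them and each will contribute to a driven block.

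Next I would group the eigenvectors of $\Pi$ by eigenvalue to extract the SBD block structure: because $\Pi$ commutes with both $L$ and $R$ and is diagonalizable, eigenspaces of $\Pi$ associated with distinct eigenvalues are orthogonal invariant subspaces of both $L$ and $R$, and therefore correspond to distinct blocks of $L_{T_\Pi}$ and $R_{T_\Pi}$. The $l$ eigenvectors coming from $\mathcal{K}$ sit in $l$ distinct eigenspaces of $\Pi$, and in each of these eigenspaces the transformed $\hat{R}$ inherits a nonzero diagonal contribution from $\mathcal{K}$. Hence each of these $l$ eigenspaces yields its own driven block, producing $l$ independent driven pairs $(\hat{L}_{j_1},\hat{R}_{j_1}),\dots,(\hat{L}_{j_l},\hat{R}_{j_l})$ in bijection with the $l$ pinned nodes swapped by $\Pi$.

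The main obstacle I anticipate is the case in which other cycles of $\Pi$ share eigenvalues with $\mathcal{K}$, so that the corresponding eigenspaces are strictly larger than the contribution from $\mathcal{K}$ alone; one must verify that the $l$ blocks remain genuinely independent, which is immediate because they are indexed by distinct eigenvalues of $\Pi$, and that each still qualifies as driven, which holds because of the nonzero contribution of $\mathcal{K}$ to $\hat{R}$ in every one of those eigenspaces. A related technical subtlety is that for $l>2$ the permutation has complex eigenvalues while the SBD of the symmetric real Laplacian $L$ is built over the reals; this is handled by the standard device of pairing conjugate eigenvectors into their real and imaginary parts, which preserves orthogonality and the pinned-subspace overlap used above, so the count of $l$ independent driven pairs is unchanged.
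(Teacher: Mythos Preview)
Your approach shares the paper's starting point---use $\Pi$ as a matrix commuting with both $L$ and $R$, and take $T_\Pi$ as an SBD transformation---but diverges in how the resulting block structure is extracted. The paper does not analyze the eigenspace decomposition of $\Pi$ directly; instead it observes that the orbit partition of $\Pi$ (which merges the swapped pinned nodes into one cluster) is an equitable cluster partition, and then invokes Theorem~\ref{lemmaquotient} together with Remark~\ref{R} to split $(L,R)$ into a quotient block carrying one pinned node and a redundant block carrying the other. Your eigenspace argument is more elementary and avoids the quotient machinery; for the case $l=2$, which is all the paper actually proves, it is correct and arguably cleaner.

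Your treatment of $l>2$, however, has a genuine gap. When you replace a conjugate pair of complex eigenvectors of $\Pi$ by their real and imaginary parts, you do obtain $l$ real orthogonal vectors supported on $\mathcal{K}$, but they do \emph{not} determine $l$ distinct real $\Pi$-invariant subspaces: each conjugate pair spans a single two-dimensional real irreducible block of $\Pi$. Hence the number of blocks produced by the real $T_\Pi$ associated with the cycle $\mathcal{K}$ is $1+\lfloor l/2\rfloor$ (for $l$ odd) or $2+(l-2)/2$ (for $l$ even), strictly less than $l$ once $l\geq 3$. The claim that ``the count of $l$ independent driven pairs is unchanged'' is therefore not supported by your argument. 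Recovering the full count would require a further refinement inside each two-dimensional real block---e.g., using that $R$ acts as the identity on the span of $\mathcal{K}$ so the symmetric $L$ can be diagonalized there---but that step must also handle contributions from other cycles sharing the same eigenvalues, and you have not carried it out. (In fairness, the paper only proves the case $l=2$ and asserts the general case is ``straightforward'' without supplying details.)
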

\begin{proof}
The proof we provide is for the case that the PNS involves two pinned nodes. The generalization to the case of  more than two pinned nodes is straightforward and is omitted here for conciseness.
{We recall that the matrix $T_\Pi$ has a special structure, where each one of its rows has nonzero entries only corresponding to the pinned nodes that are swapped by the PNS.}
We can then associate a `quotient network' to the PNS where all the pinned nodes swapped by the PNS are mapped to only one node. By using Theorem \ref{lemmaquotient} and Remark \ref{R}, we see that the transformation $T_\Pi$  decouples the pair  $(L,R)$ into a quotient block with one pinned node and an orthogonal (redundant) block with the other pinned node. 
\end{proof}

As an example, consider the $N=10$ node weighted network represented in Fig.\ \ref{fig:Abu} (a), with nodes color coded according to the $M=8$ equitable clusters to which they belong. Thin (thick) edges indicate a weight of the connection equal to one (two.) The three pinned nodes ($s=3$), number $1$, $2$, and $6$ are indicated with a red arrow pointing at them. 
This is an example of a network for which the quotient network is not fully controllable.   Application of the transformation matrix $\hat{T}$ to the pair $(L, R)$  decouples the pair $(L, R)$ into four block pairs: a $7$-dimensional qc pair (shown in black), a $1$-dimensional rc pair (in red), a $1$-dimensional qu pair (in green) and a $1$-dimensional ru pair (in blue). 
\begin{figure}[H]
    \centering
    \includegraphics[scale=.36]{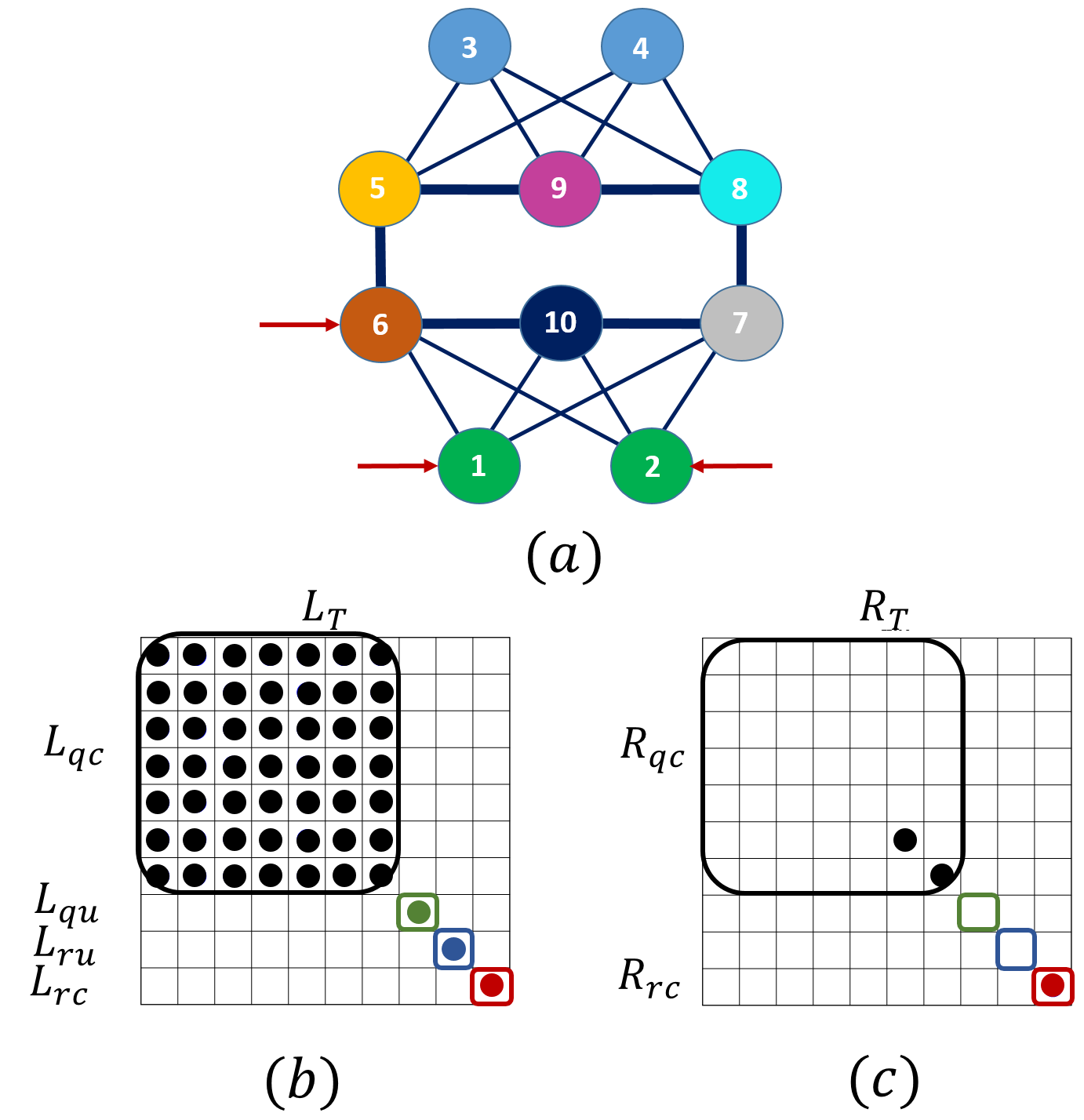}
    \caption{(a) $N=10$-node weighted network. Nodes are colored according to the $M=8$ equitable clusters to which they belong. Thin (thick) edges indicate a weight of the connection equal to one (two.) Application of transformation matrix $\hat{T}$ to (b) $L$ and (c) $R$ decouples the pair $(L, R)$ into the qc block pair shown in black, the rc block pair shown in red, the qu block pair shown in green, and the ru block pair shown in dark blue.}
    \label{fig:Abu}
\end{figure}

It can be seen that this network has one PNS, which corresponds to a disjoint cycle that swaps node 1 with 2, $l = 2$,
\begin{equation}
    \Pi=\begin{pmatrix}
    0 & 1 & 0 & 0 & 0 & 0 & 0\\
    1 & 0 & 0 & 0 & 0 & 0 & 0\\
    0 & 0 & 1 & 0 & 0 & 0 & 0\\
    0 & 0 & 0 & 1 & 0 & 0 & 0\\
    0 & 0 & 0 & 0 & 1 & 0 & 0\\
    0 & 0 & 0 & 0 & 0 & 1 & 0\\
    0 & 0 & 0 & 0 & 0 & 0 & 1
    \end{pmatrix}
\end{equation}
The block-diagonal transformation matrix $T_\Pi$  {whose columns are} the eigenvectors of matrix $\Pi$ is,
\begin{equation}
    T_\Pi=\begin{pmatrix}
-1/\sqrt{2} & 1/\sqrt{2} & 0 & 0 & 0 & 0 & 0 & 0 & 0 & 0 \\ 
1/\sqrt{2} & 1/\sqrt{2} & 0 & 0 & 0 & 0 & 0 & 0 & 0 & 0 \\ 
0 & 0 & 1 & 0 & 0 & 0 & 0 & 0 & 0 & 0 \\ 
0 & 0 & 0 & 1 & 0 & 0 & 0 & 0 & 0 & 0 \\ 
0 & 0 & 0 & 0 & 1 & 0 & 0 & 0 & 0 & 0 \\ 
0 & 0 & 0 & 0 & 0 & 1 & 0 & 0 & 0 & 0 \\ 
0 & 0 & 0 & 0 & 0 & 0 & 1 & 0 & 0 & 0 \\ 
0 & 0 & 0 & 0 & 0 & 0 & 0 & 1 & 0 & 0 \\ 
0 & 0 & 0 & 0 & 0 & 0 & 0 & 0 & 1 & 0 \\ 
0 & 0 & 0 & 0 & 0 & 0 & 0 & 0 & 0 & 1 
    \end{pmatrix}
\end{equation}
Now by applying the transformation matrix $T_\Pi$ to the pair $(L, R)$, we have:
\begin{equation}
\begin{array}{clc}
L_\Pi=\left( \resizebox{.85\hsize}{!}{$
\begin{array}{c | c c c c c c c c c} 
-3 & 0 & 0 & 0 & 0 & 0 & 0 & 0 & 0 & 0 \\ 
\hline
0 & -3 & 0 & 0 & 0 & \sqrt{2} & \sqrt{2} & 0 & 0 & \sqrt{2} \\ 
0 & 0 & -3 & 0 & 1 & 0 & 0 & 1 & 1 & 0 \\ 
0 & 0 & 0 & -3 & 1 & 0 & 0 & 1 & 1 & 0 \\ 
0 & 0 & 1 & 1 & -6 & 2 & 0 & 0 & 2 & 0 \\ 
0 & \sqrt{2} & 0 & 0 & 2 & -6 & 0 & 0 & 0 & 2 \\ 
0 & \sqrt{2} & 0 & 0 & 0 & 0 & -6 & 2 & 0 & 2 \\ 
0 & 0 & 1 & 1 & 0 & 0 & 2 & -6 & 2 & 0 \\ 
0 & 0 & 1 & 1 & 2 & 0 & 0 & 2 & -6 & 0 \\ 
0 & \sqrt{2} & 0 & 0 & 0 & 2 & 2 & 0 & 0 & -6 
\end{array}$}
\right)\\\\
    R_\Pi=\left( \resizebox{.45\hsize}{!}{$
    \begin{array}{c | c c c c c c c c c}
1 & 0 & 0 & 0 & 0 & 0 & 0 & 0 & 0 & 0 \\ 
\hline
0 & 1 & 0 & 0 & 0 & 0 & 0 & 0 & 0 & 0 \\ 
0 & 0 & 0 & 0 & 0 & 0 & 0 & 0 & 0 & 0 \\ 
0 & 0 & 0 & 0 & 0 & 0 & 0 & 0 & 0 & 0 \\ 
0 & 0 & 0 & 0 & 0 & 0 & 0 & 0 & 0 & 0 \\ 
0 & 0 & 0 & 0 & 0 & 1 & 0 & 0 & 0 & 0 \\ 
0 & 0 & 0 & 0 & 0 & 0 & 0 & 0 & 0 & 0 \\ 
0 & 0 & 0 & 0 & 0 & 0 & 0 & 0 & 0 & 0 \\ 
0 & 0 & 0 & 0 & 0 & 0 & 0 & 0 & 0 & 0 \\ 
0 & 0 & 0 & 0 & 0 & 0 & 0 & 0 & 0 & 0 
    \end{array}$}
    \right)
    \end{array}
    \end{equation}

We see that the application of $T_\Pi$ {to} the network shown in Fig.\ \ref{fig:Abu}, decouples the pair $(L, R)$ into a $9$-dimensional driven quotient block and a $1$-dimensional driven redundant block. 

{
To conclude, Lemma \ref{lemma7} states that choosing two pinned nodes which are swapped by a PNS leads to two independent driven blocks and confirms that both the topology of the network and the number and choice of the pinned nodes affect the number and dimensions of the driven pairs.}

\section{Application {to Complex} networks}



{In the previous sections we have applied our approach to a variety of simple networks with a small number of nodes. Here, we extend our analysis to larger synthetic networks and complex networks with topologies taken from the literature. The main goal of this section is to provide evidence of what appears to be a generic property of these networks in terms of the structure of the blocks that arise from the decomposition.} Table \ref{table:realnet} provides for each network information on the number of nodes $N$, the number of edges $E$, the number of pinned nodes $s$, and $u$ the dimension of the largest block which corresponds to the only quotient controllable block. All these networks are undirected and unweighted. {For each network in the table, $s=7$ (number chosen without loss of generality) pinned nodes are randomly selected.}

{The first and second networks ($SF_1$ and $SF_2$) are synthetic scale-free networks generated with the static model \cite{korea},  with exponents of the power-law degree distribution equal to $\alpha=2.11$ and $\alpha=2.56$, respectively. In both cases, we first generated the network connectivity using the static model, then removed all nodes that were not part of the network giant component.  The third dataset, DD-g147, is a biological protein interaction network \cite{nr}. The fourth dataset, ca-sandi-auths, represents a collaboration network between scientists at Sandia National Labs \cite{nr}. The last dataset, Case60nordic is a power grid network \cite{5491276}.} 


To demonstrate the use of the method to determine the stability of the synchronous solution, we consider the R\"{o}ssler oscillator (Eq. \ref{eq:ross}) at each network node and set $\bG(\bx_i)=[x_i,y_i,0]^T$ and $\bH(\bx_i)=[0,y_i,0]^T$.  For this choice of the functions $\bF, \bG, \bH$, the target synchronous solution is stable for $\gamma$ larger than a critical value $\gamma^*_1$. 
We emphasize that the choice of the specific oscillator is not relevant for the goal specified above, and we have obtained similar results for other choices of the functions $\bF, \bG, \bH$.  Table \ref{table:realnet} compares 
the {numerically estimated} values of
$\gamma^*_1$ and $\gamma^*_2$, where the latter is the critical value of $\gamma$ above which the MLE of the largest block of the pair ($L_T, R_T$) becomes negative. As can be seen, for all these examples the MLE corresponding to the largest block determines {the} stability of the target solution. 

{Figure \ref{fig:4net} compares -- by plotting them against the pinning control coupling coefficient $\gamma$ -- the synchronization error of the whole network (panels a,c,e,g,i) and the MLE of the largest block of the pair ($L_T, R_T$), whereas the MLEs of the remaining blocks are all negative (panels b,d,f,h,j).}

Figure \ref{error} shows the values of $\gamma^*_1$ and $\gamma^*_2$ versus the number of pinned nodes $s$ for the last network listed in Table \ref{table:realnet} ($60nordic$). For each value of $s$, the pinned nodes are randomly chosen. As can be seen from Fig.\ \ref{error},   $\gamma_1^*$ and $\gamma_2^*$ are found to be in close agreement for all values of $s$.    Figure \ref{error} provides evidence that, independent of the number of pinned nodes, the stability of the target synchronous solution is solely a function of the MLE of the largest block of the pair $(L_T, R_T)$. This has two main implications: on the one hand this is advantageous because stability of the target synchronous solution is only determined by this one block, on the other hand it is disadvantageous because this one block is typically quite large, indicating that our ability to reduce the dimension of the problem is intrinsically limited for these networks. 

{We would like to emphasize that the
purpose of this section was not to show application of the pinning control problem to examples of practical interest, but rather to show general properties of the blocks obtained when randomly selecting the pin nodes of a {large} complex network. In this respect, we
found the remarkable property that for all the networks considered ({synthetic and `real'}) {and for all the random selections of the pin nodes,} there
is typically one larger quotient and controllable block, and the ability to drive the
network to the target solution is \textit{de facto} determined by the stability of this one block. }

\begin{table}[H]
\centering
\caption{
For each network  we report $N$, ${E}$, $s$, and $u$, the total number of nodes, the number of edges, the number of pinned nodes, and the dimension of the $qc$ block, respectively. $\gamma^*_1$ is the value of $\gamma$ above which the network converges to the synchronous solution, and $\gamma^*_2$ is the critical value of  $\gamma$ above which the MLE of the largest block of the pair ($L_T, R_T$) becomes negative. All these networks are undirected and unweighted.}\label{table:realnet}
\begin{tabular}{|c||c|c|c|c|c|c|c|}
\hline
     Label & Name & $N$ & ${E}$ & $s$ & $u$ & $\gamma^*_1$ & $\gamma^*_2$\\
     \hline
     \hline
     $Net_1$ & $SF_1$ & 72 & 100 & 7 & 46 & 1.43 & 1.425\\
     \hline
     $Net_2$ & $SF_2$ & 63 & 80 & 7 & 55 & 1.43 & 1.415\\
     \hline
     $Net_3$ & DD-g147\cite{nr} & 99 & 317 & 7 & 94 & 2.385 & 2.368\\
     \hline
     $Net_4$ & ca-sandi-auths\cite{nr} & 56 & 70 & 7 & 13 & 1.27 & 1.268\\
     \hline
     $Net_5$ & Case60nordic\cite{capitanescu}& 60 & 72 & 7 & 56 & 1.302 & 1.291\\
     \hline
     \end{tabular}
\end{table}
\begin{figure}[H]
    \centering
     \subfigure{\includegraphics[width=0.9\linewidth]{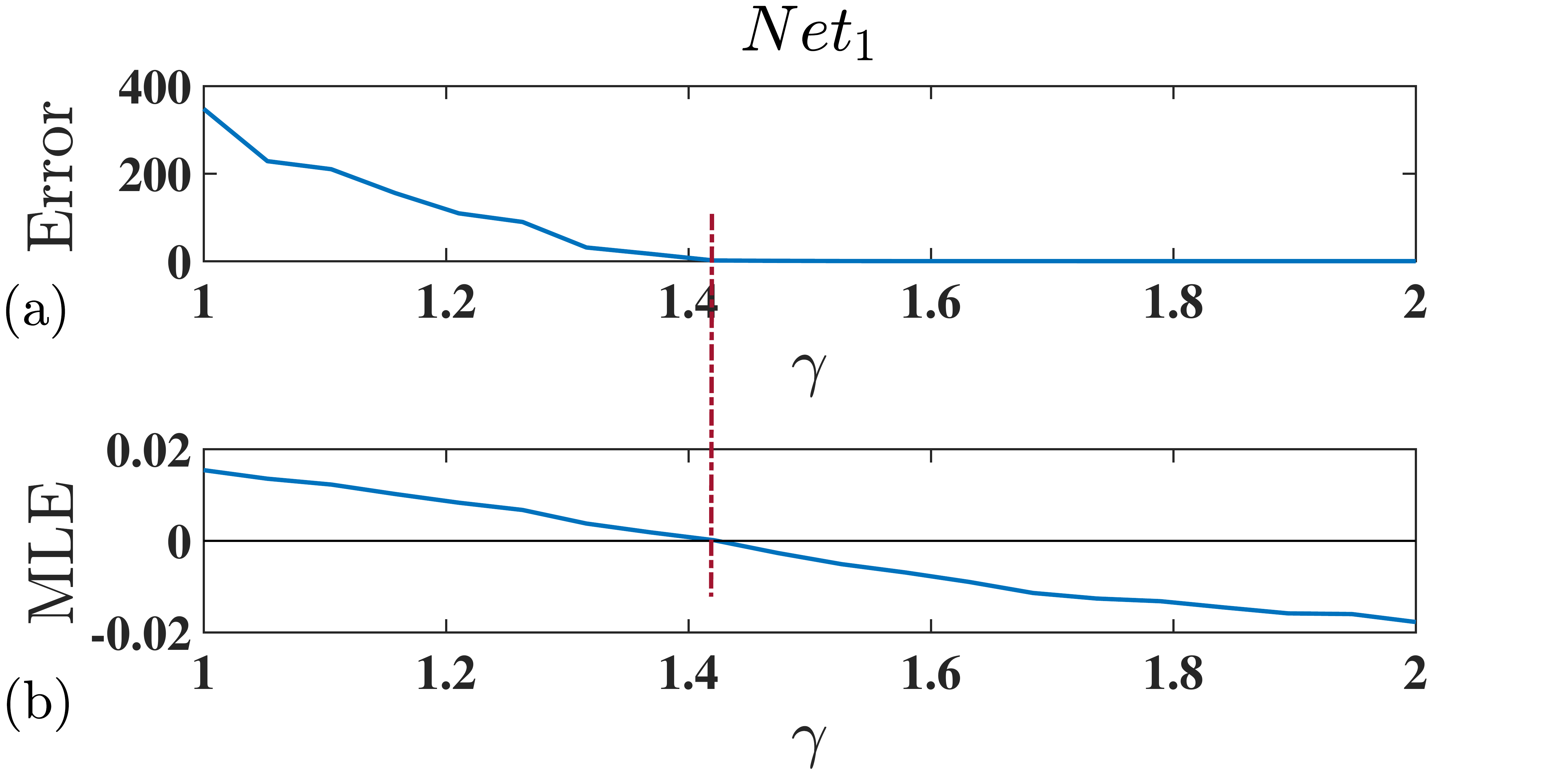}}\\
     \subfigure{\includegraphics[width=0.9\linewidth]{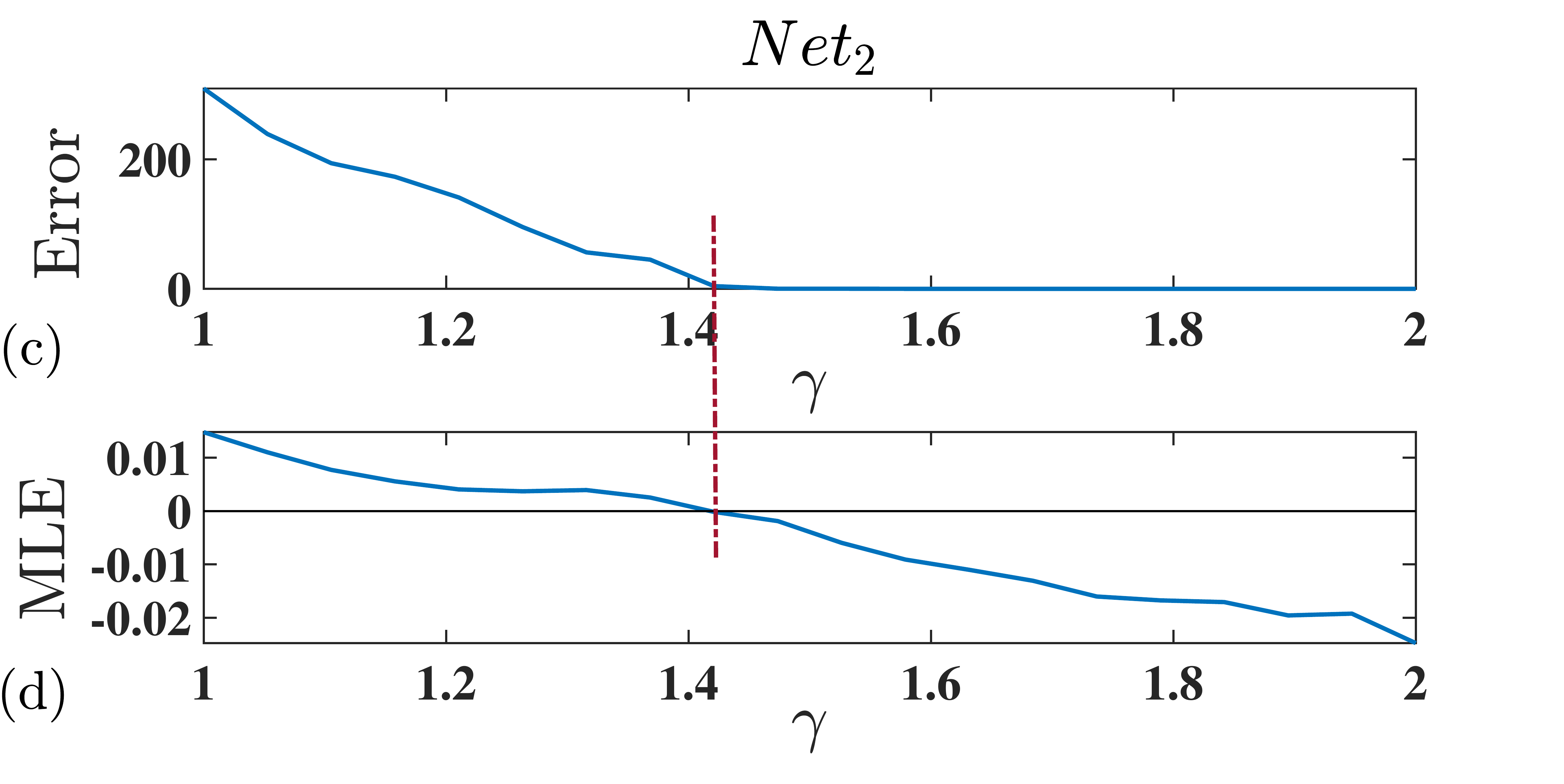}}\\
       \subfigure{\includegraphics[width=0.9\linewidth]{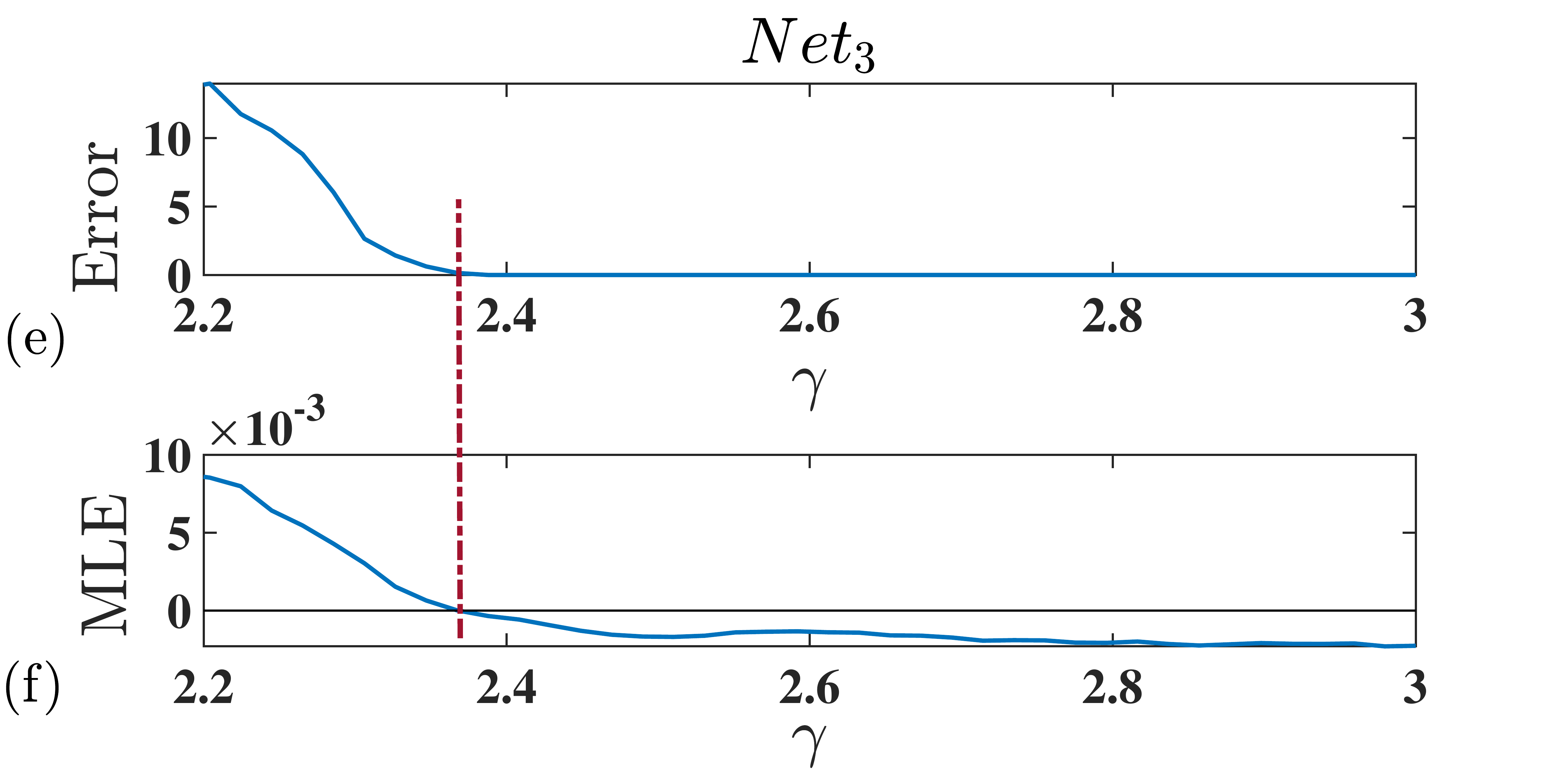}}\\
        \subfigure{\includegraphics[width=0.9\linewidth]{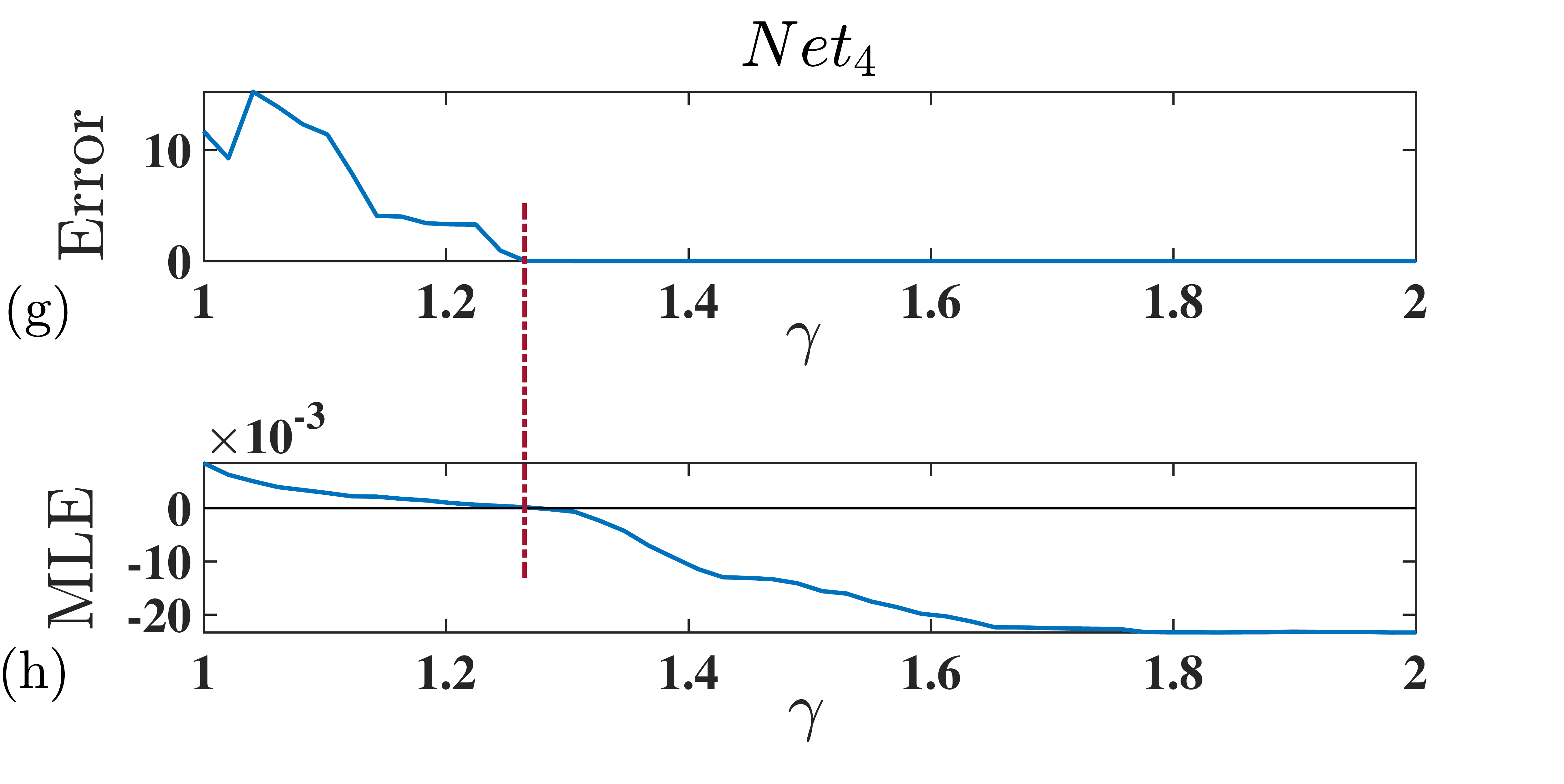}}\\
         \subfigure{\includegraphics[width=0.9\linewidth]{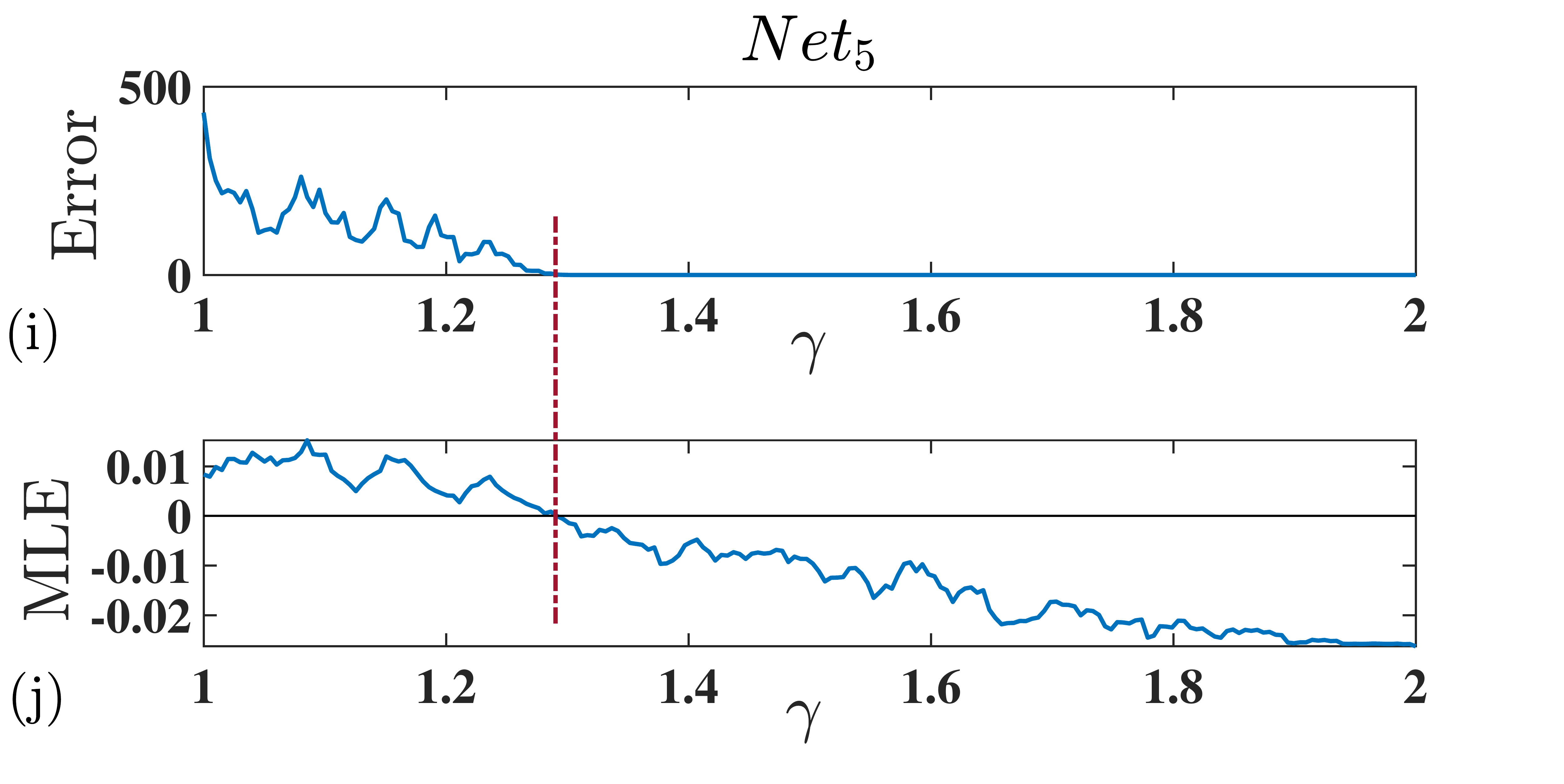}}
    \caption{(a,c,e,g,i) Synchronization error for the pair ($L, R$) as a function of $\gamma$. (b,d,f,h,j) MLE of the largest block of the pair ($L_T, R_T$) as a function of $\gamma$. The network labels $Net_1$, $Net_2$, etc are consistent with Table II.}
    \label{fig:4net}
\end{figure}
\begin{figure}[H]
    \centering
    \includegraphics[width=\linewidth]{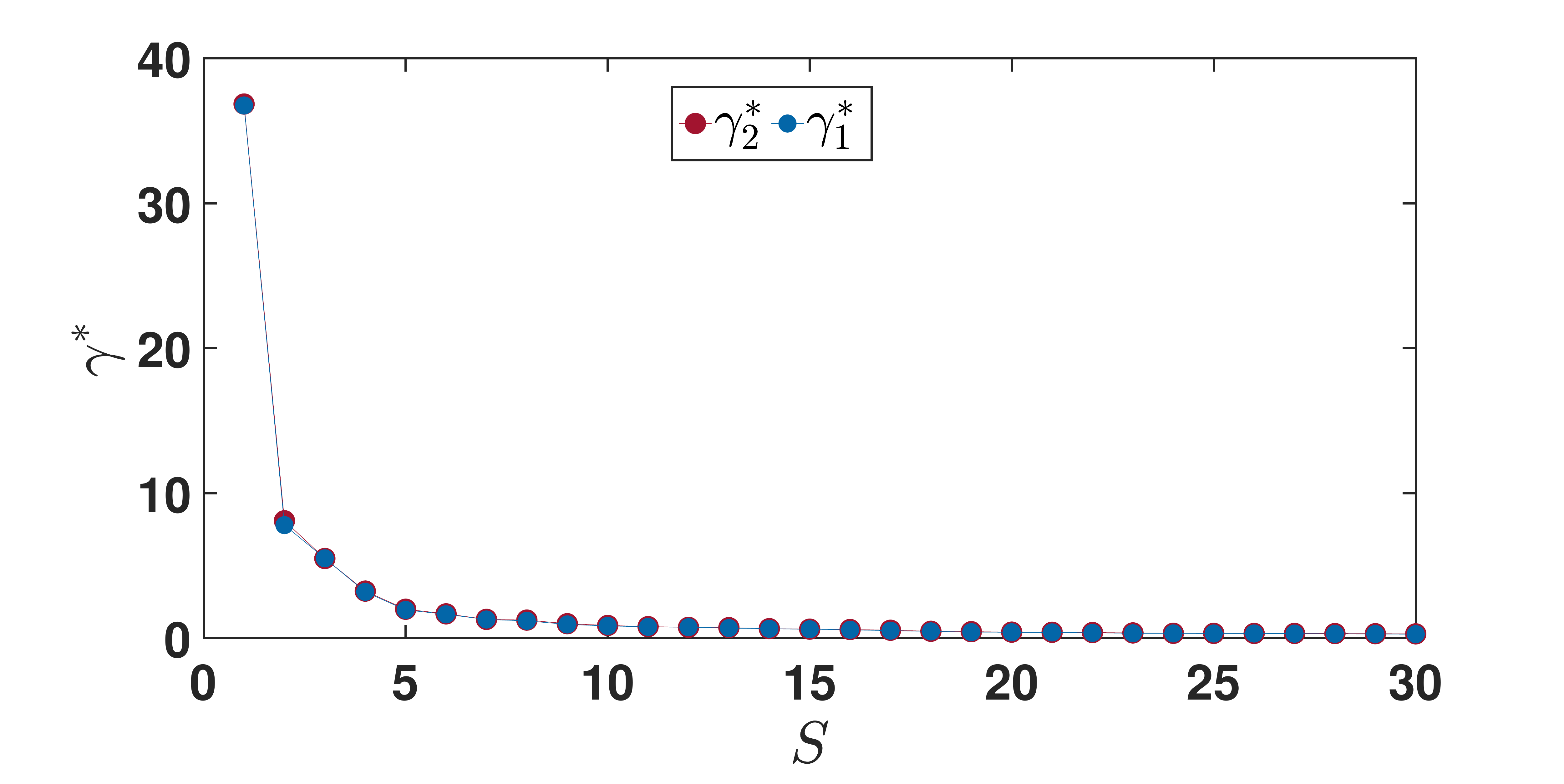}
    \caption{$\gamma^*_1$ and $\gamma^*_2$ versus the number of pinned nodes for the $60nordic$ network listed in Table \ref{table:realnet}, where $\gamma^*_1$ is the value of the $\gamma$ above which the error goes to zero, and $\gamma^*_2$ is the zero-crossing point of the MLE of the largest block of the pair ($L_T, R_T$).}
    \label{error}
\end{figure}

\color{black}

\section{Conclusion}\label{conclusion}
While techniques for the simultaneous block diagonalization (SBD) of matrices have been previously applied to the network synchronization problem \cite{Ir:So,zhang2020},
{here, our focus is on decoupling the synchronization stability equations of a network with 
pinning control, 
consisting of two different types of coupling: (i) node-to-node coupling among the network nodes and (ii) input-to-node coupling from source node to the `pinned nodes'.}
 Our main result is that {we prove that the blocks resulting from the SBD can be categorized into four types}, which we call quotient controllable, quotient uncontrollable, redundant controllable, redundant uncontrollable. This has important consequences as it indicates that, for this class of networks, the SBD transformation can be replaced by application of an alternative transformation that decouples the stability problem into a quotient and a redundant part and each one of these two into a controllable and an uncontrollable part. {Different from previous applications of the SBD technique \cite{Ir:So,zhang2020}, we provide a characterization about the dimensions of the independent sets of equations in which the stability problem is reduced.} 
 Our analysis applied to several complex networks from the literature shows that stability of the target synchronous solution is always determined by the maximum Lyapunov exponent of only one quotient controllable block. 

{From the standpoint of stability analysis, it is convenient to have small blocks.} One may conclude that it may be desirable to minimize the dimension of the controllable subspace, as we have shown that the blocks corresponding to the non-controllable subspace are all scalar, which is the best possible outcome in terms of dimension reduction.
In terms of stability, we have investigated how the synchronizability (or pin-controllability) varies depending on the individual blocks. One point that is left for future study is how to choose the pinned nodes in an optimal way, i.e., so as to enforce desirable properties of the controlled network. {It is also possible to extend our work to the case that delays affect either the node-to-node coupling function $\bG$ or the source-to-node coupling function $\bH$. The techniques for simultaneous block diagonalization of matrices can also be applied to the case of time-varying networks \cite{dariani2011effect,feng2016cluster,lin2021pinning,shi2021synchronization}, but with the caveat that the blocks resulting from the decomposition may vary in time with the network time evolution. }

\section*{Supplementary Material}
{
The supplementary material includes further clarifications on the calculation and then application of the transformation matrices $T$ and $\hat{T}$. Note 1 provides further information on the network with $N=10$ nodes shown in Figure 5. Note 2 focuses on another example of a star network with $N=4$ nodes, for which we considered all possible choices of unique combinations of pinned nodes.
Note 3 includes the proof that that if $T^{(1)}, ... ,T^{(M)}$ is a set of $N \times N$ real matrices, then the intersection of their null spaces is equal to the null space of the sum of their inner products.
}
\section*{Acknowledgement} The authors are in debt to Isaac Klickstein and Galen Novello for insightful discussions.
\section*{Data Availability}
The data that support the findings of this study are available
within the article.

\section{Appendix}
\setcounter{equation}{0}
\renewcommand\theequation{A.\arabic{equation}}
\subsection{\texorpdfstring{Derivation of Blocks P\textsubscript{1} and P\textsubscript{2}}{}}\label{Ap1}

 Matrix-matrix products and be expressed as matrix-vector products using the Kronecker product.
Let $A \in \mathbb{R}^{n \times p}$, $X \in \mathbb{R}^{p \times m}$ and $B \in \mathbb{R}^{n \times m}$.
The matrix-matrix product $AX = B$ can be equivalently expressed as,

\begin{equation}
    (I_m \otimes A) \text{vec}(X) = \text{vec}(B)
\end{equation}
Similarly, for $A \in \mathbb{R}^{p \times m}$ and $X \in \mathbb{R}^{n \times p}$ then the matrix-matrix product $XA = B$ can be equivalently expressed as,
\begin{equation}
    (A^T \otimes I_n) \text{vec}(X) = \text{vec}(B)
\end{equation}
With these two identities, the four equations in Eq. \eqref{12} can be equivalently expressed as matrix-vector product.
\begin{subequations}\label{eq:LP_kron}
    \begin{equation}
        ((I_s \otimes L_{11}) - (L_{11}^T \otimes I_s)) \text{vec}(P_1) = \boldsymbol{0}_{s^2}
    \end{equation}
    \begin{equation}
        ((I_\tau \otimes L_{22}) - (L_{22}^T \otimes I_\tau)) \text{vec}(P_2) = \boldsymbol{0}_{\tau^2}
    \end{equation}
    \begin{equation}
        (L_{12}^T \otimes I_s ) \text{vec}(P_1) - (I_\tau \otimes L_{12}) \text{vec}(P_2) = \boldsymbol{0}_{s\tau}
    \end{equation}
    \begin{equation}
        -(I_s \otimes L_{21}) \text{vec}(P_1) + (L_{21}^T \otimes I_\tau) \text{vec}(P_2) = \boldsymbol{0}_{s\tau}
    \end{equation}
\end{subequations}

Combine the first two lines of Eq. \eqref{eq:LP_kron} into a composite linear system of equations and the second two lines of Eq. \eqref{eq:LP_kron} into another composite linear system of equations to create Eq. \eqref{eq:nullspaces}.

%
\bibliography{main}

\end{document}